\documentclass[twocolumn]{aastex63}

\usepackage{hyperref}

\newcommand{\nx}{N_{\rm I}} 
\newcommand{\nt}{N_{\rm T}} 
\newcommand{\nw}{N_{\rm W}} 

\usepackage{mathtools}
\usepackage{bm}
\usepackage{amsthm}
\newtheorem{theorem}{Theorem}
\newtheorem{corollary}{Corollary}
\newtheorem{lemma}{Lemma}
\newtheorem{proposition}{Proposition}[section]
\newtheorem{assumption}{Assumption}

\newcommand{\rev}[1]{{#1}}

\begin{document}
\title{Change point detection and image segmentation for time series of astrophysical images}

\shorttitle{change point detection and image segmentation}

\author[0000-0003-2762-6057]{Cong Xu}
\affil{Department of Statistics, University of California, Davis, One Shields Avenue, Davis, CA 95616, USA}
\author[0000-0003-4243-2840]{Hans Moritz G\"unther}
\affil{MIT, Kavli Institute for Astrophysics and Space Research, 77 Massachusetts Avenue, Cambridge, MA 02139, USA}
\author[0000-0002-3869-7996]{Vinay L.\ Kashyap}
\affil{Center for Astrophysics $|$ Harvard \& Smithsonian, 60 Garden Street, Cambridge, MA 02138, USA}
\author[0000-0001-7067-405X]{Thomas C. M. Lee}
\affil{Department of Statistics, University of California, Davis, One Shields Avenue, Davis, CA 95616, USA}
\author{Andreas Zezas}
\affil{Department of Physics, University of Crete, 710 03 Heraklion, Crete, Greece}


\begin{abstract}
Many astrophysical phenomena are time-varying, in the sense that their intensity, energy spectrum, and/or the spatial distribution of the emission suddenly change. This paper develops a method for modeling a time series of images. Under the assumption that the arrival times of the photons follow a Poisson process, the data are binned into 4D grids of voxels (time, energy band, and x-y coordinates), and viewed as a time series of non-homogeneous Poisson images. The method assumes that at each time point, the corresponding multi-band image stack is an unknown 3D piecewise constant function including Poisson noise. It also assumes that all image stacks between any two adjacent change points (in time domain) share the same unknown piecewise constant function. The proposed method is designed to estimate the number and the locations of all the change points (in time domain), as well as all the unknown piecewise constant functions between any pairs of the change points. The method applies the minimum description length (MDL) principle to perform this task. A practical algorithm is also developed to solve the corresponding complicated optimization problem. Simulation experiments and applications to real datasets show that the proposed method enjoys very promising empirical properties. Applications to two real datasets, the XMM observation of a flaring star and an emerging solar coronal loop, illustrate the usage of the proposed method and the scientific insight gained from it.
\end{abstract}

\keywords{Poisson distribution (1898), Maximum likelihood estimation (1901), Detection (1911), Spatial point processes (1915), Time series analysis (1916), Astronomy data analysis (1858)}

\NewPageAfterKeywords

\section{Introduction}\label{s:intro}

Many phenomena in the high-energy universe are time-variable, from coronal flares on the smallest stars to accretion events in the most massive black holes. Often, this variability can just be seen \rev{``}by-eye" but at other times, we need to use robust methods founded in statistics to distinguish random noise from significant variability.
Realizing where the change has occurred is critical for subsequent scientific analyses, e.g., spectral fitting and light curve modeling. Such analyses must focus on those intervals in data space which are properly tied to the changes in the physical processes that generate the observed photons. Therefore, it is of importance to identify sources as well as to locate their spatial boundaries.
Our goal is to detect change points in the time direction; that is, the times at which sudden changes happened during the underlying astrophysical process.

Change point detection in time series is well studied, and several algorithms employing different philosophies have been developed.
For example, \citet{doi:10.1111/j.1467-9892.2012.00819.x} employed hypothesis testing to study structural break detection, using both non-parametric approaches like cumulative sum (CUSUM) and parametric methods like likelihood ratio statistic to deal with different kinds of structural breaks.
Another likelihood-based approach commonly used to analyze astronomical time series is Bayesian Blocks \citep{Scargle_2013} which finds change points by fitting piecewise constant models between change points.
A good example of \rev{the} model driven approach is the Auto-PARM procedure developed by \citet{davis-et-al-2006}. By modeling the piecewise-stationary time series, the procedure is able to simultaneously estimate the number of change points, their locations and the parametric model for each piece. Here the minimum description length (MDL) principle by \citet{10.5555/534247, Rissanen2007} is applied in the model selection procedure. \citet{davis_yau_2013} proved the strong consistency of the MDL-based change point detection procedure. Another example is Automark by \citet{wong2016}, who developed {an MDL-based} methodology which detects the changes in observed emission from astronomical sources in 2-D time-wavelength space.

\citet{dey-et-al-2010} loosely classified image segmentation techniques into two categories: (i) image driven approaches and (ii) model driven approaches. Image driven segmentation techniques are mainly based on the discrete pixel values of the image. 
For example, the graph-based algorithm by \citet{felzenszwalb-2004} treats pixels as vertices, and the weights of the edges are based on the similarity between the features of pixels. The evidence for a boundary between two regions can be measured based on the graph. Such methods work in many complicated cases as there is no underlying model for images. 
Model driven approaches rely upon the information of the structure of the image. These methods are based on the assumption that the pixels in the same region have similar characteristics. The Blobworld framework by \citet{carson-et-al-1999} assumes that the features of pixels are from a underlying multivariate Gaussian mixture model. Neighboring pixels whose features are from the same Gaussian distribution are grouped into the same region. 

Here we follow the model driven approach. 
In order to develop a change point detection method for image time series data, we begin by specifying an underlying statistical model for the images between any two consecutive change points. In doing so we also study the statistical properties of the change point detection method.
We assume the underlying Poisson rate for each of the images follows a piecewise constant function. Therefore, the region growing algorithm developed by \citet{adams-bischof-1994} for greyvalue images can be naturally applied.

Given the previous successes of applying the MDL principle \citep{10.5555/534247, Rissanen2007} to other time series change point detection and image segmentation problems \citep[e.g.,][]{davis-et-al-2006,lee-2000,wong2016}, here we also use MDL to tackle our problem of joint change point detection and image segmentation for time series of astronomical images. Briefly, MDL defines the best model as the one that produces the best \rev{lossless} compression of the data. There are different versions of MDL, and the one we use is the so-called two-part code; a gentle introduction can be found in \citet{Lee01:MDLtut}. 
\rev{When comparing with other versions of MDL such as normalized maximum likelihood, one advantage of the two-part version is that it tends to be more computationally tractable for complex problems such as the one this paper considers.  It has also been shown to enjoy excellent theoretical and empirical properties in other model selection tasks~\citep[e.g.,][]{Aue-Lee11, lee-2000, davis-et-al-2006, davis_yau_2013}} Based on MDL, we develop a practical algorithm that can be applied to simultaneously estimate the number and locations of the change points, as well as to perform image segmentation on each of the images.

\section{Methodology}
Our method is applied to 4-D data cubes where 2-D spatial slices in several energy passbands are stacked in time.  Such data cubes are commonly available, though in high-energy astrophysics, data are usually obtained in the form of a list of photons.  The list contains the two-dimensional spatial coordinates where the photons were recorded on the detector, the times they were recorded, and their energies or wavelengths.
To facilitate our analysis, we bin these data into a 4-D rectangular grid of boxes. After the binning of the original data, we obtain a 4D table of photon counts indexed by the two-dimensional coordinates $(x, y)$, time index $t$ and energy band $w$.
The dataset is thus a series of multi-band images with counts of photons as the values of the pixels.
Since the emission times of photons can be considered a non-homogeneous Poisson process, and the grids do not overlap with each other, the counts in each pixel are independent, and the image slices are also independent.

We first partition these images into a set of non-overlapping region segments using a seeded region growing (SRG) method, and then merging adjacent segments to minimize MDL (see Section~\ref{subsection:SRG}).  The counts in each segment are modeled as Poisson counts (see Section~\ref{subsection:modeling}; the implementation details of the algorithm are described in Section~\ref{subsection:algorithm}).  We minimize the MDL criterion across the images by iteratively removing change points along the time axis and applying the SRG segmentation onto the images in each of the time \rev{intervals}.  Key pixels that are influential in how the segmentations and change points are determined are then identified through searching for changes in the fitted intensities (see Section~\ref{subsection:highlight}).  Such regions are the focus of follow-up analyses.

We list the variables, parameters, and notation used here in Table~\ref{tab:notations}.

\begin{deluxetable*}{cl}
\tablenum{1}
\tablecaption{Statistical Notations \label{tab:notations}}
\tablewidth{0pt}
\tablehead{
\colhead{Notation} & \colhead{Definition}
}
\startdata
$\nx$ & number of pixels in each 2-D spatial image\\
$\nt$ & number of time bins\\
$\nw$ & number of energy bands\\
$\Delta T_{t}$ & duration of the $t^{\rm th}$ time bin\\
$y_{i,t,w}$ & photon counts within the $i^{\rm th}$ spatial pixel, the $t^{\rm th}$ time interval and the $w^{\rm th}$ energy range\\
$\lambda_{i,t,w}$ & Poisson rate for the $i^{\rm th}$ spatial pixel, the $t^{\rm th}$ time interval and the $w^{\rm th}$ energy range\\
$K$ & number of change points\\
$\tau_{k}$ & location of the $k^{\rm th}$ change point\\
$m^{(k)}$ & number of region segments for the $k^{\rm th}$ interval between two consecutive change points\\
$a_{h}^{(k)}$ & the area (number of pixels) of the  $h^{\rm th}$ region segment of the $k^{\rm th}$ interval between two consecutive change points\\
$b_{h}^{(k)}$ & the ``perimeter'' (number of pixel edges between this and neighboring regions) of the $h^{\rm th}$ region of the $k^{\rm th}$ interval\\
$\mu_{h,w}^{(k)}$ & Poisson rate for the $h^{\rm th}$ region segment and the $w^{\rm th}$ energy range of the $k^{\rm th}$ interval\\
$\hat{\mu}_{h,w}^{(k)}$ & fitted Poisson rate for the $h^{\rm th}$ region segment and the $w^{\rm th}$ energy range of the $k^{\rm th}$ interval\\
\enddata
\end{deluxetable*}

\subsection{Region Growing and Merging} 
\label{subsection:SRG}
As a first step in the analysis, a suitable segmentation method must be applied to the images to delineate regions of interest (ROIs).  For this, we use the seeded region growing (SRG) method of \citet{adams-bischof-1994} to obtain a segmentation of the image. \rev{We choose SRG over other image segmentation algorithms for its speed and reliability~\citep{fan2015}.  Also, it can be straightforwardly incorporated to the Poisson setting.}

At the beginning of SRG, we select a set of seeds, manually or automatically, from the image.  Each seed can be a single pixel or a set of connected pixels. A seed comprises an initial region. Then each region starts to grow outward until the whole image is covered. \rev{(See Section~\ref{sec:practical} offers some suggestions on the selection of initial seeds.)} 

At each step, the unlabelled pixels which are neighbors to at least one of the current regions comprise the set of candidates for growing the region. 
One of these candidates is selected to merge into the region, based on the Poisson likelihood that measures the similarity between a candidate pixel and the corresponding region.  We repeat this process until all the pixels are labeled, thus producing an initial segmentation by SRG.

At the end of the SRG process, we are left with an oversegmentation, i.e., with the image split into a larger than optimal number of segments.  We then merge these segments based on the largest reduction or smallest increase in the MDL criterion (see below).  From this sequence of segmentations, we select the one that gives the smallest value of the MDL criterion as the final ROIs.

\subsection{Modeling a Poisson Image Series}
\label{subsection:modeling}
\subsubsection{Input Data Type}

We require that the data are binned into photon counts in an $\nx \times \nt \times \nw$ tensor $\{y_{i,t,w}\},i=1,...,\nx,t=1,...,\nt,w=1,...,\nw$, where $y_{i,t,w}$ is the photon counts within the $i^{\rm th}$ spatial rectangular region, the $t^{\rm th}$ time interval $[T_{t-1}, T_t)$ and the $w^{\rm th}$ energy range $[W_{w-1}, W_w)$.
After binning, the data can be viewed as a time series of images in different energy bands. 
The values of each pixel are the photon counts in the different bands in the corresponding spatial region.

Notice that compared with Automark \citep{wong2016}, we incorporate 2-D spatial information into the model, thus extending the analysis from two  (wavelength/energy and time) to four dimensions (wavelength/energy, time, and projected sky location).  We also relax the restriction that the bin sizes along any of the axes are held fixed.  Thus, sharp changes are more easily detected.

As the data in high-energy astrophysics are photon counts, we use a Poisson process to model the data,
\begin{eqnarray}
\label{equation:poisson}
    y_{i,t,w} \stackrel{i.i.d.}{\sim} \text{Poisson}(\lambda_{i,t,w} \Delta T_{t}),
\end{eqnarray}
where $\Delta T_{t} = (T_t-T_{t-1})$.

Our goal is to infer model intensities $\lambda_{i,t,w}$ from the observed counts data $\{y_{i,t,w}\}$. We are especially interested in detecting significant changes of $\lambda_{i,t,w}$ over time. If there are changes, we also want to estimate the number and locations of the change points.

To simplify the presentation, we first develop a time-homogeneous model,
i.e., one where there are no change points and $\lambda_{i,t,w}$ is unchanging with $t$ (Section~\ref{subsub:const}).  We will then consider more complex cases, where change points are added to the model so that $\lambda_{i,t,w}$ is allowed to change over time (Section~\ref{subsub:change})

\subsubsection{Piecewise Constant Model}\label{subsub:const}

First consider a temporally homogeneous Poisson model without any change points.  Then each image can be treated as an independent Poisson realization of the same, unknown, true image.

We model the image as a 3-dimensional piecewise constant function. That is, the 2-dimensional space of $x$-$y$ coordinates is partitioned into $m$ non-overlapping regions such that all the pixels in a given region have the same Poisson intensity.
Different energy bands share the same spatial partitioning. Rigorously, the Poisson parameter $\lambda_{i,t,w}$ can be written as a summation of region-specific Poisson rates $\mu_{h,w}$ times the corresponding indicator functions of regions ($I_{\{ i \in R_{h}\}}$ is 1 for pixel $i$ in region $R_{h}$ and 0 otherwise) in the following format:
\begin{eqnarray}
    \lambda_{i,t,w} =  \sum_{h=1}^{m}  \mu_{h,w} I_{\{ i \in R_{h}\}}.
\end{eqnarray}
Here $i \in R_{h}$ means \rev{``}the $i^{\rm th}$ pixel in the $h^{\rm th}$ region" and $I$ is the indicator function. $R_{h}$ is the index set of the pixels within the $h^{\rm th}$ region, with $R_{h} \subseteq \{1,...,\nx \}$. Also, $\mu_{h,w}$ is the Poisson rate for the $w^{\rm th}$ band of the $h^{\rm th}$ region. The partition of the image is specified by $\bm{R} = \{R_{h} | h=1,...,m\}$.

\subsubsection{Adding Change Points to the Model}\label{subsub:change}

Now we allow the underlying Poisson parameter $\lambda_{i,t,w}$ to change over time $t$. We model $\lambda_{i,t,w}$ as a piecewise constant function of $t$.

Suppose these $\nt$ images can be partitioned into $K+1$ homogeneous intervals by $K$ change points
$$\mathbf{\tau} = \{ \tau_0=0,~ \tau_1,\tau_2,...,\tau_K, ~\tau_{K+1}=\nt \}$$
For the $t^{\rm th}$ image, suppose that it belongs to the $k^{\rm th}$ time interval; i.e., $t \in (\tau_{k-1},\tau_{k}]$. For each given $t$, let $\lambda$ be a two-dimensional piecewise constant function with $m^{(k)}$ constant regions. Then $\lambda$ can be represented by
\begin{eqnarray}
    \lambda_{i,t,w} =  \sum_{k=1}^{K+1} I_{ \{ t \in (\tau_{k-1},\tau_{k}] \} } \sum_{h=1}^{m^{(k)}} \mu_{h,w}^{(k)} I_{\{ i \in R_{h}^{(k)}\}},
\end{eqnarray}
where $m^{(k)}$ is the number of regions within the $k^{\rm th}$ interval. Let $\mathcal{M} = \{m^{(k)} | k=1,2,...,K+1 \}$. The partition of the images within interval $k$ is specified by $\bm{R}^{(k)} = \{R_{h}^{(k)} | h=1,...,m^{(k)}\}$. And the overall partition is $\mathcal{R} = \{\bm{R}^{(k)} | k=1,2,...,K+1\}$. The Poisson rates $\mu_{h,w}^{(k)}$ is the value for the $w^{\rm th}$ band in the $h^{\rm th}$ region of the $k^{\rm th}$ interval. Let $\bm{\mu}^{(k)} = \{\mu_{h,w}^{(k)} | h=1,...,m^{(k)}, w = 1,...,\nw \}$. And let $\bm{\mu} = \{\bm{\mu}^{(k)} | k=1,...,K+1\}$, and $i \in R_{h}^{(k)}$ means \rev{``}the $i^{\rm th}$ pixel is in the $h^{\rm th}$ region of the $k^{\rm th}$ interval".

\subsubsection{Model Selection Using MDL}

Given the observed images $\{y_{i,t,w}\}$, we aim to obtain an estimate of $\lambda_{i,t,w}$. In other words, we want an estimate of the image partitions and the Poisson rates of the regions for each band. 
It is straightforward to estimate the Poisson intensities given the region partitioning, but the partitioning is a much more complicated model selection problem.

We will apply MDL to select the best fitting model. Loosely speaking, the idea behind MDL for model selection is to first obtain a MDL criterion for each possible model, and then define the best fitting model as the minimizer of this criterion. MDL defines the best model as the one that produces the best compression of the data. The criterion can be treated as the code length, or amount of hardware memory required to store the data.

First we present the MDL criterion for the homogeneous Poisson model, then follow it by the MDL criterion for the general case (i.e., with change points).

Following similar arguments as in \citet{lee-2000} (see their Appendix~B), the MDL criterion for segmenting $\nt$ homogeneous images is 
\begin{eqnarray}
    \text{MDL}(m, R, \hat{\mu}) = m \log(\nx) + \frac{\log(3)}{2} \sum_{h=1}^m b_{h} + \nonumber \\
    \frac{\nw}{2} \sum_{h=1}^m \log(\nt a_h) - \sum_{w=1}^{\nw}  \sum_{t=1}^{\nt} \sum_{h=1}^m \sum_{i \in R_{h}} y_{i,t,w} \log(\hat{\mu}_{h,w}), \nonumber \\
\end{eqnarray}
where $a_{h}$ and $b_{h}$ are, respectively, the \rev{``}area" (number of pixels) and \rev{``}perimeter" (number of pixel edges) of region $R_{h}$, and 
\begin{equation}
\hat{\mu}_{h, w} = \frac{1}{\sum_{t=1}^{\nt} \Delta T_{t} a_{h}} \sum_{t=1}^{\nt} \sum_{i \in R_{h}} y_{i,t,w}
\end{equation}
is the maximum likelihood estimate of the Poisson rate in the corresponding region. Note that the indices of $\hat{\bm{\mu}}=\{\hat{\mu}_{hw}\}$ run over the region segments $h=1..m$ and the passbands $w=1..\nw$.

For the Poisson model with change points, once the number of change points $K$ and the locations $\bm{\tau} = \{ \tau_1,...,\tau_K \}$ are specified, for each $k \in (1,2,...,K+1)$, $m^{(k)}$ and $\bm{R}^{(k)}$ can be estimated independently. Using the previous argument, the MDL criterion for images within the same homogeneous interval is
\begin{eqnarray}
    &&\text{MDL}(\tau_{k-1}, \tau_k, m^{(k)},\bm{R}^{(k)}, \hat{\bm{\mu}}^{(k)}) \nonumber \\
    &=& m^{(k)} \log(\nx) + \frac{\log(3)}{2} \sum_{h=1}^{m^{(k)}} b_{h}^{(k)} \nonumber \\
    &&+ \frac{\nw}{2} \sum_{h=1}^{m^{(k)}} \log((\tau_k - \tau_{k-1}) a_h^{(k)}) \nonumber \\
    &&- \sum_{w=1}^{\nw}  \sum_{t = \tau_{k-1}+1}^{\tau_k} \sum_{h=1}^{m^{(k)}} \sum_{i \in R_{h}^{(k)}} y_{i,t,w} \log(\hat{\mu}_{h,w}^{(k)}).
\end{eqnarray}
Then the overall MDL criterion for the model with change points is
\begin{eqnarray}
    &&\text{MDL}_{\text{overall}}(K,\mathbf{\tau},\mathcal{M},\mathcal{R},\hat{\bm{\mu}}) \nonumber \\
    = && K \log(\nt) + \sum_{k=1}^{K+1} \text{MDL}(\tau_{k-1}, \tau_k, m^{(k)},\bm{R}^{(k)}, \hat{\bm{\mu}}^{(k)}). \nonumber \\
\label{eqn:MDL}
\end{eqnarray}
To sum up, using the MDL principle, the best-fit model is defined as the minimizer of the criterion~(\ref{eqn:MDL}). The next subsection presents a practical algorithm for carrying out this minimization.

\subsubsection{Statistical Consistency}\label{sec:consistency}

An important step to demonstrating the efficacy of our method is to establish its statistical consistency.  That is, if it is shown that as the size of the data increases, the differences between the estimated model parameters and the true values decrease to zero, then the method can be said to be free of asymptotic bias, can be applied in the general case, and is elevated above a heuristic. We prove in Appendix~\ref{appendix:A} that the MDL-based model selection to choose the region partitioning, as well as the corresponding Poisson intensity parameters, is indeed strongly statistically consistent under mild assumptions of maintaining the temporal variability structure of $\lambda_{i,t,w}$.

\subsection{Practical Minimization}
\label{subsection:algorithm}

\subsubsection{An Iterative Algorithm}

Given its complicated structure, global minimization of $\text{MDL}_{\text{overall}}(K,\tau,\mathcal{M},\mathcal{R},\hat{\mu})$ (Equation~\ref{eqn:MDL}) is virtually infeasible when the number of images $N_{T}$ and the number of pixels $N_{I}$ are not small, because the time complexity of the exhaustive search is of order $2^{\nt \nx}$.

\begin{figure}[ht!]
\plotone{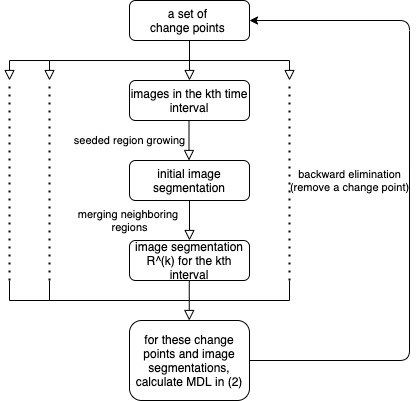}
\caption{Schematic illustration of the minimization algorithm}
\label{fig:flowchart}
\end{figure}

We iterate the following two steps to (approximately) minimize the MDL criterion~(\ref{eqn:MDL}).
\begin{enumerate}
\item Given a set of change points, apply the image segmentation method to all the images belonging to the first homogeneous time interval and obtain the MDL best-fitting image for this interval.  Repeat this for all remaining intervals.  Calculate the MDL criterion~(\ref{eqn:MDL}). 
\item Modify the set of change points by, for example, adding or removing one change point.  In terms of what modification should be made, we use the greedy strategy to select the one that achieves the largest reduction of the overall MDL value in~(\ref{eqn:MDL}).
\end{enumerate}
For Step~1 we begin with a large set of change points (i.e., \rev{an} over-fitted model). In Step~2 we remove one change point (i.e., merge two neighboring intervals) to maximize the reduction of the MDL value. The procedure stops and declares the optimization is done if no further MDL reduction can be achieved by removing change points. This is similar to backward elimination for statistical model selection problems. See Figure~\ref{fig:flowchart} for a flowchart of the whole procedure.

\subsubsection{Practical Considerations}
\label{sec:practical}

Here we list some practical issues that are crucial to the success of the above minimization algorithm. 

{\em Initial seed allocation for SRG:} 
The selection of the initial seeds in SRG plays an important role in the performance of the algorithm. To obtain a good initial oversegmentation, there must be at least one seed within each true region. Currently we use all the local maxima as well as a subset of the square lattice as the initial seeds. Based on simulation results (see Section~\ref{subsection:spatial_simulation}), when the number of initial seeds is inadequate, the SRG will underfit the images, which will in turn lead to an overfitting of the change points and will lead to an increased false positive rate. 
On the other hand, it could be time-consuming when the number of initial seeds is very large, especially for high resolution images. 
\rev{We developed an algorithm that allocates initial seeds automatically based on locating local maxima. However,} an optimal selection of initial seeds almost certainly requires expert intervention, as it depends on the type of data that we work with. To reduce the chance of obtaining a poor oversegmentation with SRG, in practice one could try using different sets of initial seeds and select the oversegmentation that gives the smallest MDL value.

{{\em Counts per bin:}} 
The photon counts in the image pixels cannot be too small, otherwise the algorithm could fail to produce meaningful output; see Section~\ref{subsection:spatial_simulation}. In some sense, small photon counts can be seen as low signal level, which means that the proposed method requires a minimum level of signal to operate with.  Therefore, care must be exercised when deciding the size for the bin.  As a rule of thumb, it should be enough to have around 100 counts for each pixel belonging to an astronomical object, while pixels from the background can have very low or even zero count.

{\em Initial change point selection:} 
Although the stepwise greedy algorithm is capable of saving a significant amount of computation time, it could still be time consuming if the initial set of change points is too large, as it might need many iterations to reach a local minimum.  It is recommended to select the initial change points based on prior knowledge, if available, in order to accelerate the algorithm.

{\em Computation Time:} 
In each iteration, the main time-consuming part is to apply the SRG. When the total number of pixels and the number of seeds are large, during the process, the number of candidates is large. Therefore, the comparison among all the candidates and the following updating manner lead to most of the computation burden. As an example, we found that it takes about 40 minutes for applying SRG and merging once on $64 \times 64$ images with about 200 seeds on a Linux machine with an octa-core 2.90 GHz Intel Xeon processor.

\subsection{Highlighting the Key Pixels}
\label{subsection:highlight}
After the change points are located, it is necessary to locate the pixels or regions that contribute to the estimation of the change points.  
The manner by which such {\em key pixels} are identified depends on the scientific context.  Below we present two methods that are applicable to the real-world examples we discuss in Section~\ref{section:real_data}

We focus here on images in a single passband, i.e., grey-valued images. For multi-band images, one can first transform a multi-band image into a single-band image by, for example, summing the pixel values in different bands, or by using only first principal component image of the multi-band image. Alternatively, one can also apply the method to each band individually, and merge the results from each band.

\subsubsection{Based on Pixel Differences}
\label{subsubsection:highlight_difference}

The first method is to highlight key pixels based on the distribution of pixel differences before and after change points. The rationale is that a pixel with different fitted values before and after a change point is a strong indicator that it is a key pixel.

Suppose the fitted values for pixel $i$ in time intervals $k$ and $(k+1)$ are $\hat{\lambda}_i^{(k)}$ and $\hat{\lambda}_i^{(k+1)}$, respectively. Given the Poisson nature of the data, we first apply a square-root transformation to normalize the fitted values. Define the difference $d_i$ for pixel $i$ as 
\begin{equation}
    d_i = \sqrt{\hat{\lambda}_i^{(k+1)}} - \sqrt{\hat{\lambda}_i^{(k)}} \,.
\end{equation}
A pixel is labelled as a key pixel if its $d_i$ is far away from the mean of all the differences. To be specific, pixel $i$ is labelled as a key pixel if
\begin{equation}
    \left| \frac{d_i - \hat{\mu}}{\hat{\sigma}} \right| > \Phi^{-1}\left(1-\frac{1}{2} p\right),
\end{equation}
where $\hat{\mu} = \frac{1}{\nx} \sum_{1}^{\nx} d_i$ and $\hat{\sigma} = \frac{1}{\Phi^{-1}(3/4)} \text{MAD}$.  Here $\text{MAD} = \text{median}(| d_i - \tilde{d} |)$ is the median absolute deviation with $\tilde{d} = \text{median}(d_i)$, and is used to obtain a robust estimate (i.e., a measure that minimizes the effect of outliers) of the standard deviation of the $d_i$'s \citep[see e.g.,][]{doi:10.1080/01621459.1993.10476408}. $\Phi^{-1}(\cdot)$ is the quantile of the standard normal distribution, and $p$ is the pre-specified significance level\footnote{$\Phi^{-1}$ is related to the standard Normal error function, with exemplar values $\Phi^{-1}(\{0.75, 0.841, 0.977, 0.9986, 1-\frac{10^{-5}}{2}, 1-\frac{10^{-10}}{2}, 1-\frac{10^{-15}}{2}\})=\{0.6745, 1, 2, 3, 4.417, 6.467, 8.014\}$.  We typically choose $p=1-\frac{10^{-15}}{2}$ as our threshold.}. Notice that by checking the sign of $d_i - \hat{\mu}$, we can deduce if pixel $i$ has increased or decreased after this change point.

\subsubsection{Based on Region Differences}
\label{subsubsection:highlight_testing}

Another method to locate key pixels is to compare pairs of regions. For any region in the time interval $k$, there must exist at least one region in time interval $(k+1)$ such that these two regions have overlapping pixels. We then test if the difference between the means of the pixels from these two regions is significant or not.  

As before, we apply the square-root transformation to the pixels within each of the regions. Then we calculate the sample means $\hat{\mu}_1$ and $\hat{\mu}_2$ and sample variances $\hat{\sigma}_1^2$ and $\hat{\sigma}_2^2$ of these two groups of square-rooted values.
Then we can for example test \rev{whether} the difference between $\hat{\mu}_1$ and $\hat{\mu}_2$ is large enough with 
\begin{eqnarray}
    \left| \frac{\hat{\mu}_2 - \hat{\mu}_1}{\sqrt{\hat{\sigma}_1^2 + \hat{\sigma}_2^2}} \right| > \Phi^{-1}\left(1-\frac{1}{2} p\right).
\end{eqnarray}

See Section~\ref{section:real_data} for the applications of these two methods on some real data sets.

Lastly we note that the selection of $p$, the significance level, deserves a more careful consideration.  As in reality one may need to do comparisons for many change points and energy bands, this becomes a multiple-testing problem where the number of tests is large.  Therefore, one should adjust the value of $p$ in order to control false positives.

\section{Simulations}

Two groups of simulations were conducted to evaluate the empirical performance of the proposed method.  A specially designed $\lambda_{i,t,w}$ was used for each of the experiments. For each experiment, we tested 13 signal levels, defined as the average number of photon counts per pixel.  For each signal level, 100 datasets were generated according to~(\ref{equation:poisson}), with $\Delta T = 1$.  The number of spectral bands $N_W=3$ so $w=1, 2, 3$ and the number of time points $N_T=60$.

\subsection{Group 1: Single Pixel}

The first group of experiments were designed to evaluate the ability of the proposed method for detecting change points, under the condition that there are no spatial variations. To be more specific, the size of the images is $1 \times 1$; i.e., only one pixel.  In other words, $\nx=1$ and the $i$ in $\lambda_{i,t,w}$ is a dummy index. Three $\lambda_{i,t,w}$'s of increasing complexity were used:
\begin{enumerate}
    \item $\lambda_{i,t,w}$ is constant; i.e., no change point {(as depicted in Figure~\ref{fig:simulation_model} \rev{(a)})}.
    \item $\lambda_{i,t,w}$ shows intensity changes but all three bands are identical at any given $t$ {(see Figure~\ref{fig:simulation_model} \rev{(b)})}.
    \item $\lambda_{i,t,w}$ shows spectral changes {(see Figure~\ref{fig:simulation_model} \rev{(c)})}.
\end{enumerate}
The first $\lambda_{i,t,w}$ was used to study the level of false positives, while the remaining two were used to study false negatives. 

\begin{figure}[ht!]
\plotone{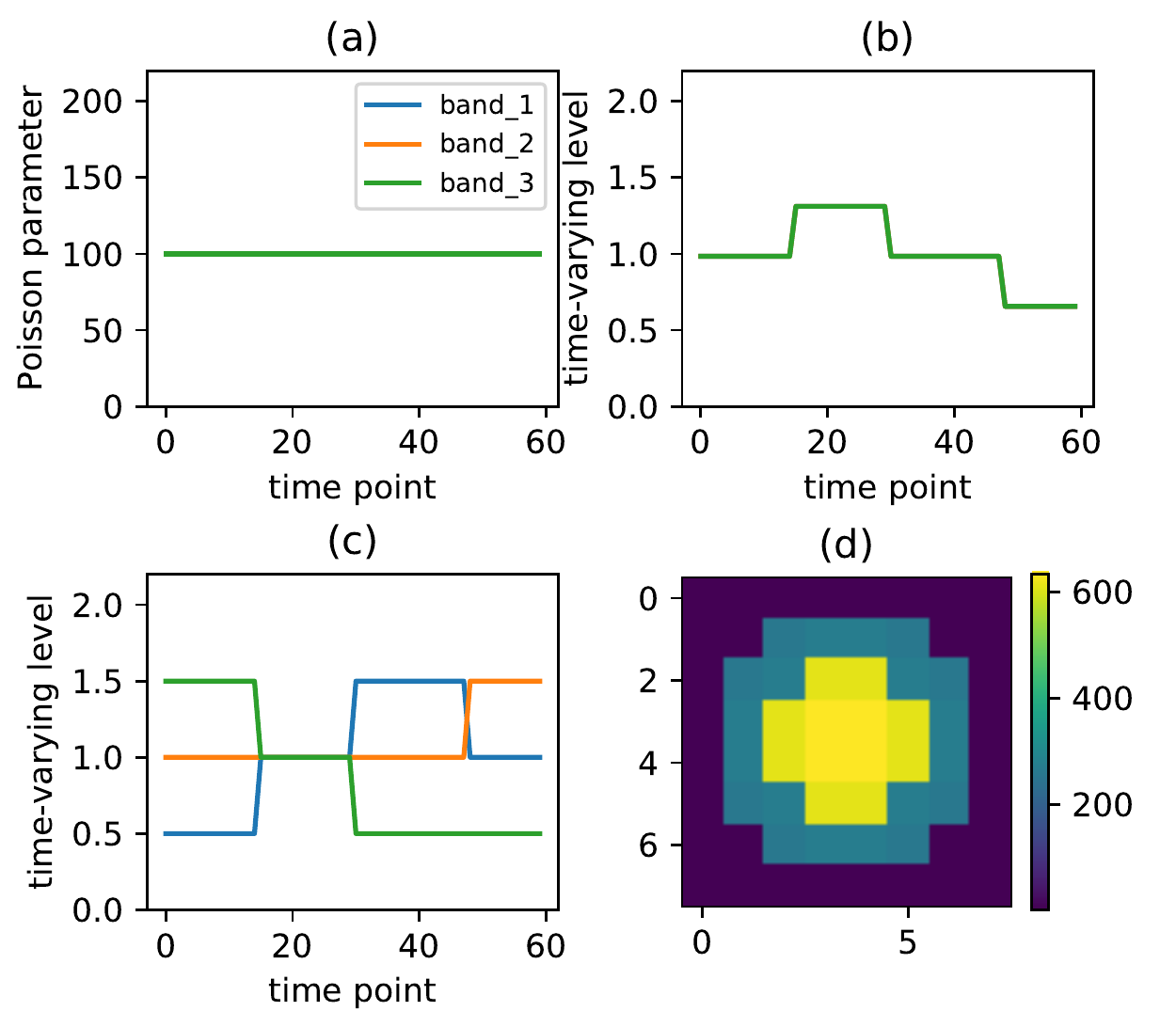}
\caption{
The Poisson rate functions $\lambda_{i,t,w}$ used in the simulation experiments. {(a):} $\lambda_{i,t,w}$ used in the single pixel experiment without change point (Section~\ref{sss:nochange}).  The x-axis denotes the time points, while the y-axis shows the values of $\lambda_{i,t,w}$ for different band $w$. 
{(b):} the $\lambda_{i,t,w}$ relative to the no-change case (top left), used in the single pixel simulation with changing intensity (Section~\ref{sss:varintens}). 
{(c):} the $\lambda_{i,t,w}$ for different passbands (marked in blue, orange and green) relative to the no-change case (top left), used in the single pixel simulation with changing intensity and spectra (Section~\ref{sss:varintspec}).
{(d):} the spatial structure used for the second group of experiments.  Size of the image is $m=n=8$.
}
\label{fig:simulation_model}
\end{figure}

\subsubsection{No Change Point}\label{sss:nochange}
As there is no change point in $\lambda_{i,t,w}$, this experiment is ideal for studying the relationship between the false positive rate and the signal level; recall the latter is defined as the average number of photon counts per pixel.

The results of this experiment (together with the next two experiments) are summarized {as the blue curves} in Figure~\ref{fig:simulation_single_pixel_result}. 
{The figure captures how well the simulation recovers the location of the change points (top left), the number of change points (top right), the excess number of change points (false positives; bottom left), and the deficit in change points (false negatives; bottom right).}
The top left plot reports the fraction of simulated datasets for which the set of the fitted change points $\hat{\bm{\tau}}$ is identical to the set of true change points $\bm{\tau}$. The top right plot presents the fraction of simulated datasets for which the fitted number of change points $\hat{K}$ equals to the true number of change points $K$.  The bottom left plot shows the average false positive rate, which is defined as the average number of falsely detected change points per possible location.  The bottom right plot presents the fraction of simulated datasets for which $\hat{\bm{\tau}}$ contains $\bm{\tau}$, i.e., $\bm{\tau} \subseteq \hat{\bm{\tau}}$. One can see that the false discovery rate seems to be quite stable across different signal levels.  Notice that the last curve is always 1 because $\bm{\tau}$ is empty for this experiment.

\begin{figure}[ht!]
\plotone{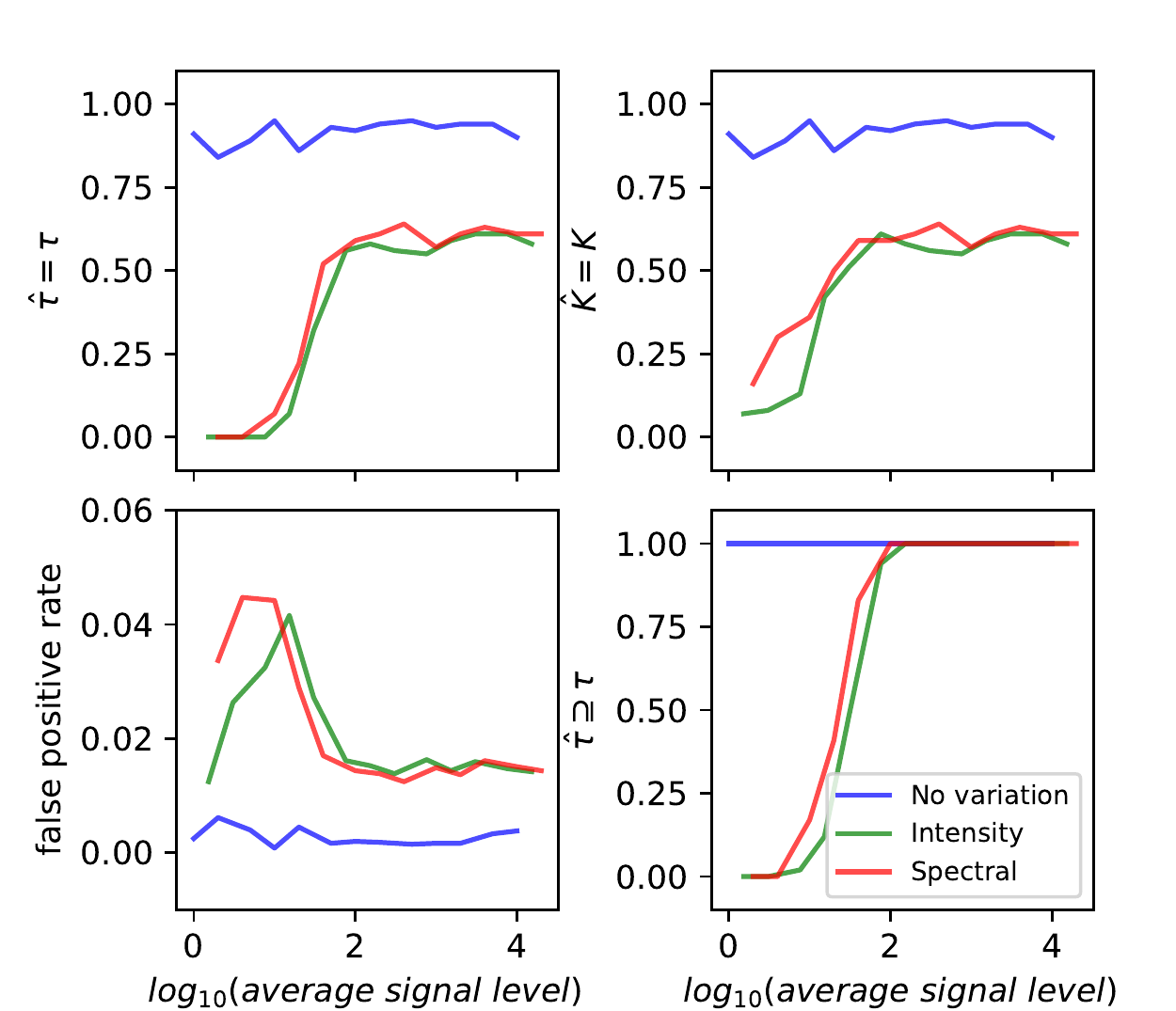}
\caption{Simulation results for Group 1 experiments (single pixel). For all four plots, the x-axes denote the logarithm of the average counts of photons over all time points and all bands. {\sl Top left:} fraction of the fitted change points $\hat{\bm{\tau}}$ that are identical to the true change points $\bm{\tau}$. {\sl Top right:} fraction of the fitted number of change point $\hat{K}$ that equals to the true number of change point $K$. {\sl Bottom left:} false positive rate. {\sl Bottom right:} fraction of fitted change points contains true change points; i.e., $\bm{\tau} \subseteq \hat{\bm{\tau}}$.  Note that the legend in the bottom right plot holds for all four plots.}
\label{fig:simulation_single_pixel_result}
\end{figure}

\subsubsection{Varying Intensity}\label{sss:varintens}

In this experiment we introduced variation in $\lambda_{i,t,w}$ by multiplying \rev{(a) and (b)} in Figure~\ref{fig:simulation_model} together. The results are reported as the green curve in Figure~\ref{fig:simulation_single_pixel_result}. One can see that when the signal level is small ($\log_{10}(\text{average signal level})<1.0$), increasing the signal level leads to more false positives: this range of signals levels are too small to provide enough information for the detection of the true change points. 

When $\log_{10}(\text{average signal level})$ is between $1.0$ and $2.0$, the proposed method starts to be able to detect the true change points as the signal level increases. Also, the false positive rate begins to drop.  One possible explanation is that for any two consecutive homogeneous time intervals, the location of the change point might not be clear when the signal level is relatively small.

As the signal level continues to increase, the proposed method becomes more successful in detecting the true change point locations. For example, when the average number of counts in each bin is greater than 100 (i.e., $\log_{10}(\text{average signal level})>2.0$), the signal is strong enough so that all the true change points can be detected successfully.  We note that there are always some false positives due to the Poisson randomness, and it seems that the false positive rate stabilizes as the signal level increases.

\subsubsection{Varying Spectrum}\label{sss:varintspec}

Here we allow different bands $w$ \rev{to} change differently at the change points. The rate $\lambda_{i,t,w}$ was obtained by multiplying \rev{(a) and (c)} in Figure~\ref{fig:simulation_model} \rev{together}. The results, which are about the same as the previous experiment (varying intensity), are reported as the red curve in Figure~\ref{fig:simulation_single_pixel_result}. When the signal level is small ($\log_{10}(\text{average signal level}) < 1.0$), the proposed method fails to detect the true change points.  When $\log_{10}(\text{average signal level})$ is between $1.0$ and $2.0$, as the signal level increases, the false positive rate begins to decrease while the true positive rate increases. When $\log_{10}(\text{average signal level}) > 2.0$, all the true change points can be detected successfully, while the false positive rate stays at the same level as signal level increases. 

\subsection{Group 2: Spatial Structure}
\label{subsection:spatial_simulation}
Instead of having a constant spatial signal (i.e., single pixel), in this second group of experiments a spatial varying structure is introduced to study the empirical performance of the proposed method. As before, three Poisson rate functions $\lambda_{i,t,w}$ are considered.  The size of the image is set to $N_I=8\times 8$.  To illustrate the importance of initial seed placement, we tested two allocation strategies: (i) we deliberately placed \rev{an} inadequate number of initial seeds and (ii) we used every pixel as an initial seed.

\subsubsection{No Change Point}

There was no change point in this experiment and {the spatial variation of} $\lambda_{i,t,w}$ is given in the bottom right plot of Figure~\ref{fig:simulation_model}. The results are reported \rev{as the blue curves} in Figure~\ref{fig:simulation_spatial_result}. One can see that if the number of initial seeds is inadequate, the false positive rate increases as the signal level increases above $\log_{10}(\text{average signal level}) > 2.5$. However, this does not happen when there are a large number of initial seeds; \rev{see the blue dotted curves in Figure~\ref{fig:simulation_spatial_result}.} In fact, for this and the following two experiments, our method did not detect any false positive change points. This suggests that when the images are under-segmented, the method tends to place more false change points to compensate for data variability not explainable by image segmentation.

\begin{figure}[ht!]
\plotone{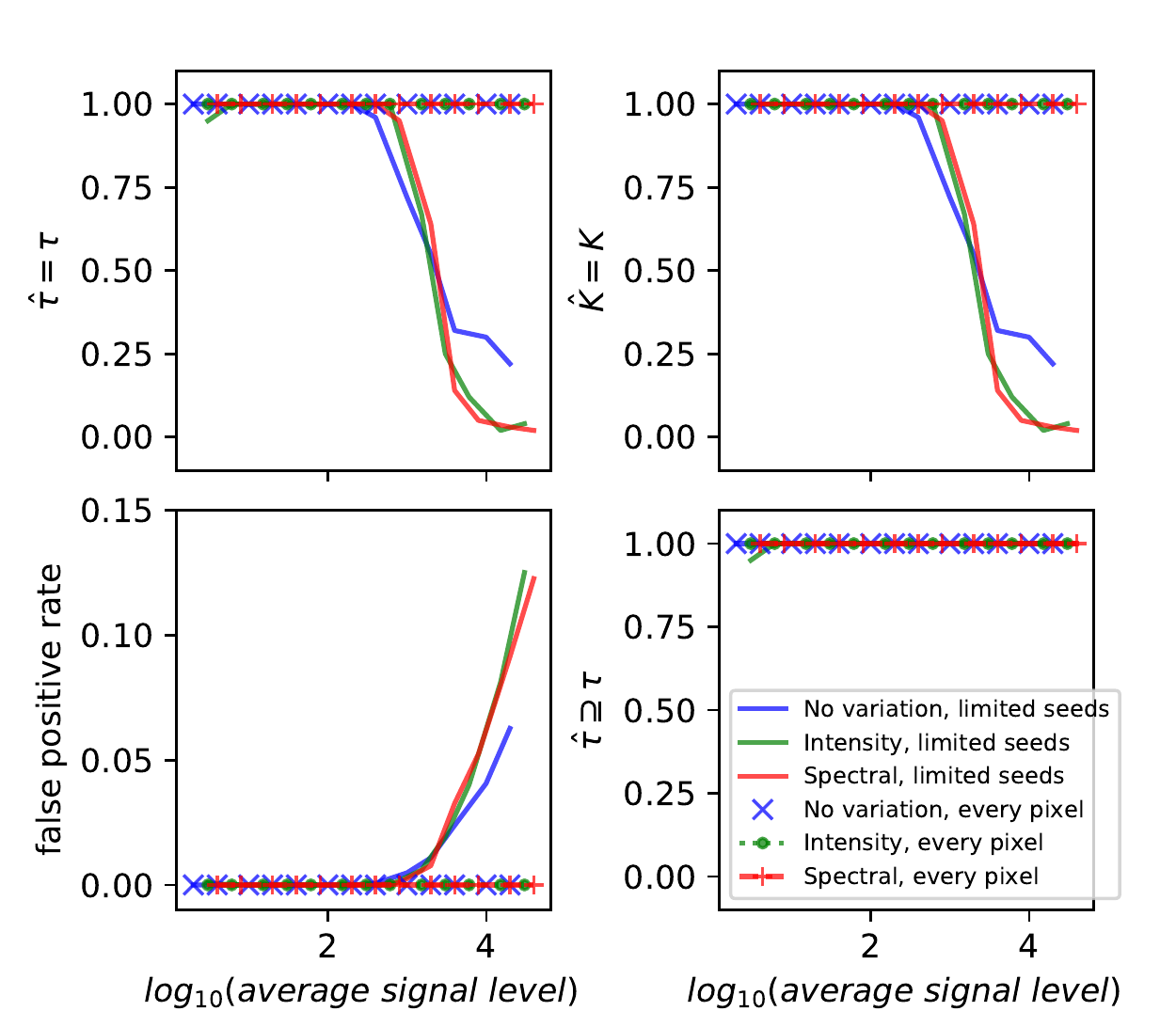}
\caption{Simulation results for Group 2 experiments (with spatial structure). For all four plots, the x-axes denote the logarithm of average count of photons over all time points and all bands. Solid curves denote the results when the number of initial seeds was inadequate, while the dotted curves show the results when every pixel was assigned as a initial seed. {\sl Top left:} fraction of the fitted change points $\hat{\bm{\tau}}$ that are identical to the true change points $\bm{\tau}$. {\sl Top right:} fraction of the fitted number of change points $\hat{K}$ that equals to the true number of change points $K$. {\sl Bottom left:} false positive rates. {\sl Bottom right:} fraction of fitted change points contains the true change points; i.e., $\bm{\tau} \subseteq \hat{\bm{\tau}}$. The legend in the bottom right plot holds for all these four plots.}
\label{fig:simulation_spatial_result}
\end{figure}

\subsubsection{Varying Intensity}

In this experiment $\lambda_{i,t,w}$ was obtained by multiplying \rev{(b) and (d)} of Figure~\ref{fig:simulation_model} together, so there are three change points over time.  The results are similar to the no change point case, and rev{summarized as the green curves} in Figure~\ref{fig:simulation_spatial_result}.

\subsubsection{Varying Spectrum}

In this last experiment the energy bands were allowed to be different, and $\lambda_{i,t,w}$ was obtained by multiplying \rev{(c) and (d)} of Figure~\ref{fig:simulation_model} together. The results, reported \rev{as the red curves in} Figure~\ref{fig:simulation_spatial_result}, are similar to the previous two experiments.

\subsection{Empirical Conclusions}
The following empirical conclusions can be drawn from the above experimental results.
\begin{itemize}
\item The method works well in all cases when the signal level is 
sufficiently large. As a rule of thumb, \rev{for binning of the original data,} it would be ideal to have 100 counts or more for each bin covering an astronomical source.
\item It is important to place enough initial seeds when applying SRG; otherwise the false positive rate will increase with the signal level. 
{See the second paragraph of Section~\ref{sec:practical} for some practical guidelines for initial seed selection.}
\end{itemize}

\section{Applications to Real Data}
\label{section:real_data}

To illustrate the usage in the astrophysics field, we apply the proposed method on two real datasets, which are more complicated than those in the previous section. Specifically, we select these datasets with some obvious time-evolving variations to demonstrate the performance of our method.

\subsection{XMM-Newton Observations of Proxima Centauri}

\begin{figure}[ht!]
\plotone{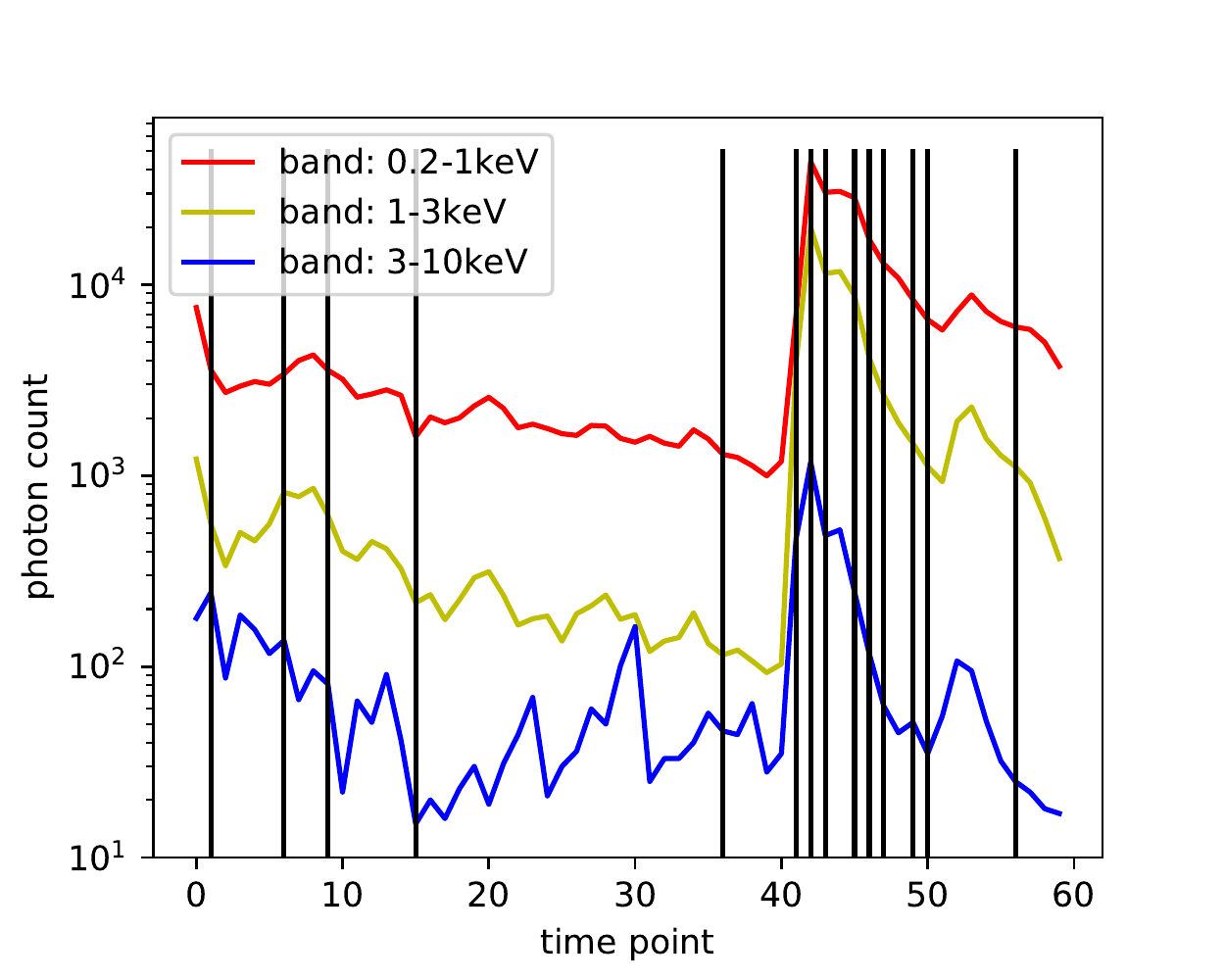}
\caption{Light curves of Proxima~Cen in different bands. Each curve denotes the number of photons within the corresponding band at a given time point index. Vertical black lines denote the locations of the detected change points.}
\label{fig:Proxima_Cen_light_curve}
\end{figure}

\object{Proxima Centauri} is the nearest star to the Sun and as such is well suited for studies of coronal activity. Like our Sun, Proxima Centauri operates an internal dynamo, which generates a stellar magnetic field. In the standard model for stellar dynamos, the magnetic field lines wind up through differential rotation. When some of the magnetic field lines reconnect, the energy is released in the stellar flare. Such flares typically show a sudden rise in X-ray emission and a more gradual decay over several hours. In flares, flux and temperature are correlated such that a higher X-ray flux corresponds to a higher temperature and thus a higher energy of the average detected photons \citep[see][for a review of X-ray emission in stellar coronae and further references]{Guedel_review}.

Despite its proximity, Proxima Centauri and its corona are unresolved in X-ray observations; it is just the point-spread function (PSF) of the telescope that distributes the incoming flux over many pixels on the detector.

\subsubsection{Data}
We use a dataset from XMM-Newton (Obs.ID 0049350101), where Proxima Centauri was observed for 67~ks on 2001-08-12. Because of the high flux, the MOS cameras on XMM-Newton are highly piled-up and we restrict our analysis to the data from the PN camera. We obtained the data from \rev{the XMM-Newton science archive hosted by the European Space Agency (ESA)\footnote{\url{https://www.cosmos.esa.int/web/xmm-newton/xsa}}. The data we received was processed by ODS version 12.0.0.}
Our analysis is based on the filtered PN event data from the automated reduction pipeline (PPS). \citet{Guedel2002} presented a detailed analysis and interpretation of this dataset.

In our analysis, we only used a subset of photons with spatial coordinates within $[25500, 27500] \times [26500, 28500]$, and it was binned as images of size $64 \times 64$.  We used the temporal bins of width $1100.4$ seconds to generate $60$ images.  We binned the data into three energy bands, $(200,1000]$, $(1000, 3000]$ and $(3000,10000]$ in eV.

\subsubsection{Results}

\begin{figure}[ht!]
\plotone{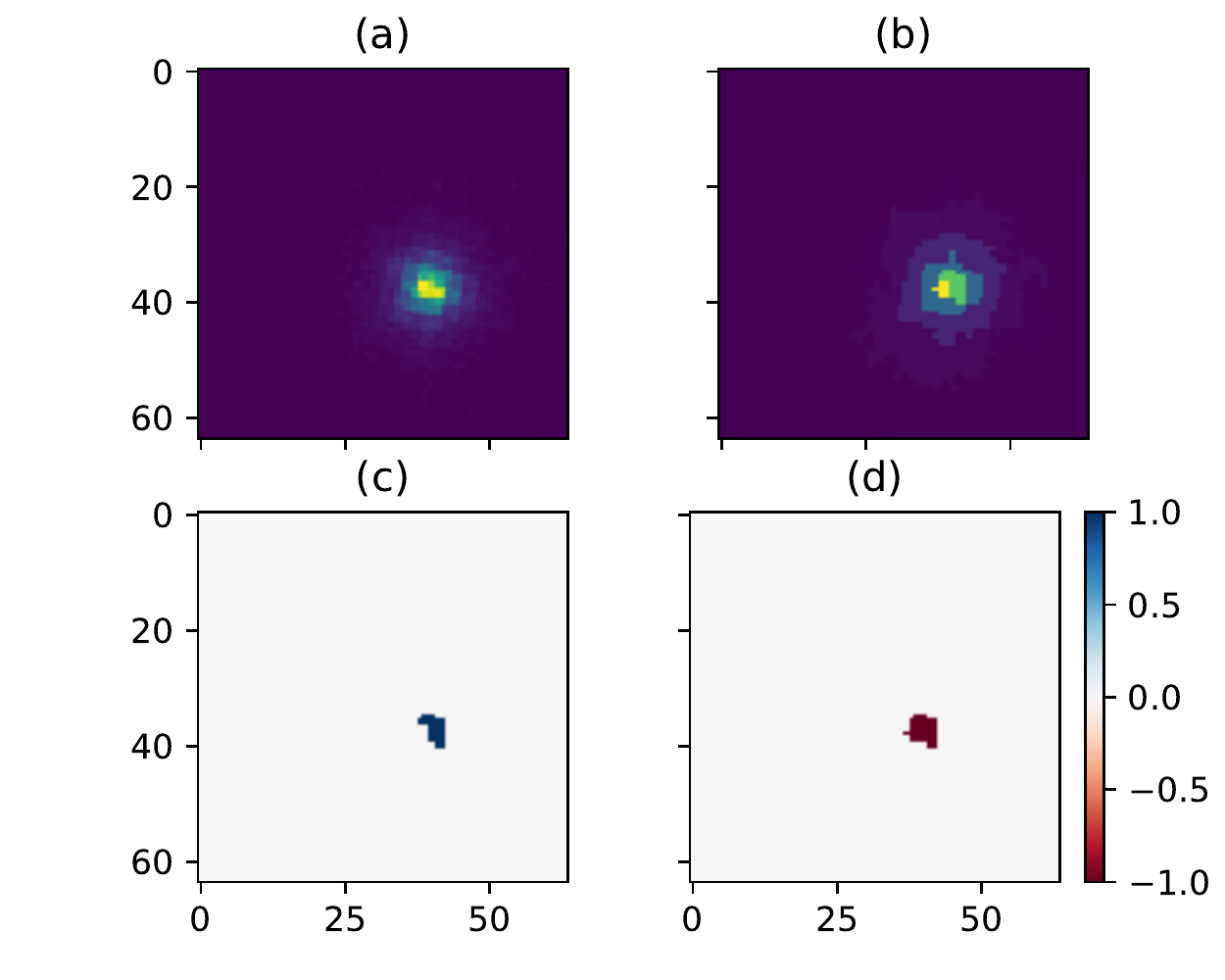}
\caption{Results for Proxima Centauri. (a): the data image at time point 42 for the first band (200, 1000] in eV. (b): the corresponding fitted value $\lambda_{i,t,w}$. (c): 
\rev{regions that show an increase (blue) and decrease (red) in intensity prior to this time point.} 
\rev{Compared with the previous time interval, there was a significant increase in the source at this time point.} (d):
\rev{as in panel {\sl c}, but for the epoch after this time point.}
\rev{After this time point, the brightness in the source decreased.} Notices that these two bottom plots share the colorbar, \rev{where the} 
value $1$ denotes increasing and $-1$ denotes decreasing
\rev{intensities}.}
\label{fig:Proxima_Cen_result}
\end{figure}

Figure~\ref{fig:Proxima_Cen_light_curve} presents the light curves for different bands as well as the locations of the detected change points. As there is only a single source of photons with negligible background signal level, the detected change points coincide with the changes of the light curves. Many change points are detected for the abrupt increase and then decrease in brightness for all the bands at the time points between 42 and 50. The time interval between 15 and 36 is detected as a homogeneous time interval, and the variation in light curves within this interval is viewed as common Poisson variations. A few change points are detected for the time interval before time point 15 and the interval after 50. A piecewise constant model is used to fit these gradual changes in intensities.

The fitted images can be found in Figure~\ref{fig:Proxima_Cen_result}. The source of most photons, a point source, is modeled by different piecewise constant models at all these time intervals. The center of the point source and the wings of the PSF region nearby were fitted by models with shapes like concentric circles.

To test our method for finding the regions of significant change, we also apply it here because we know what the answer should be. For this dataset, as the observations for different bands change simultaneously, we combine all the three bands to highlight the key pixels. That is, we highlight the regions for each band and take the intersection of these regions. Examples of the results based on the method in Section~\ref{subsubsection:highlight_testing} can be found in Figure~\ref{fig:Proxima_Cen_result}. The method indeed picks out the point source. With significance level \rev{$p = 10^{-2}$}, the abrupt increase in brightness of the source at time point 41 and 42, as well as the sudden decrease at time point 43 can be detected successfully. By modifying the significance level, different sensitivity can be achieved.

\subsection{Isolated Evolving Solar Coronal Loop}

\begin{figure*}[h!]
    \centering
    \includegraphics[width=5in]{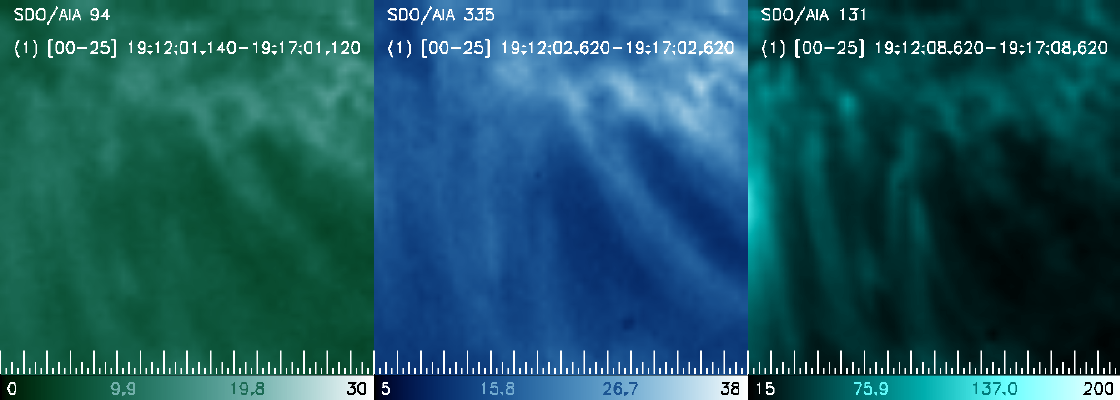}
    \includegraphics[width=5in]{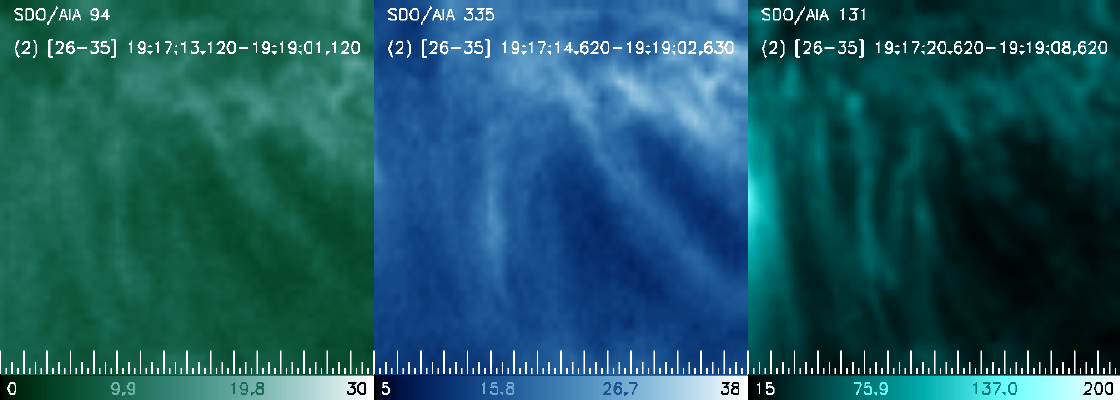}
    \includegraphics[width=5in]{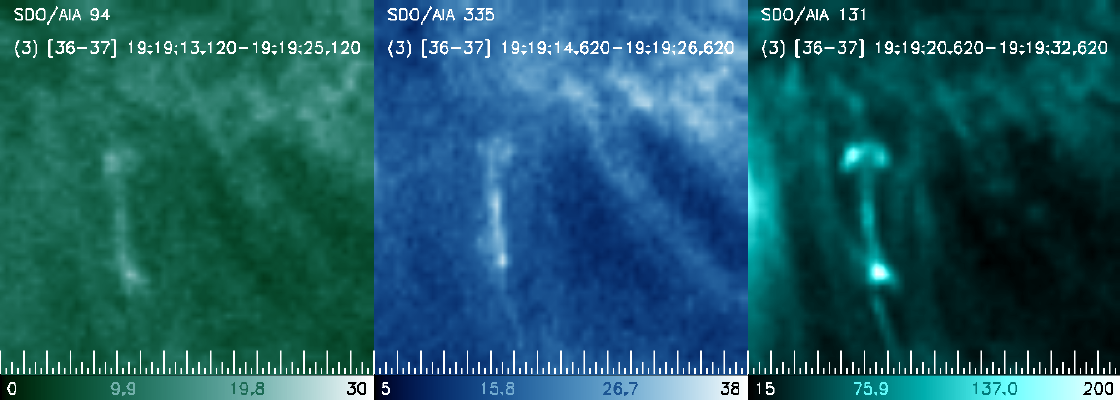}
    \includegraphics[width=5in]{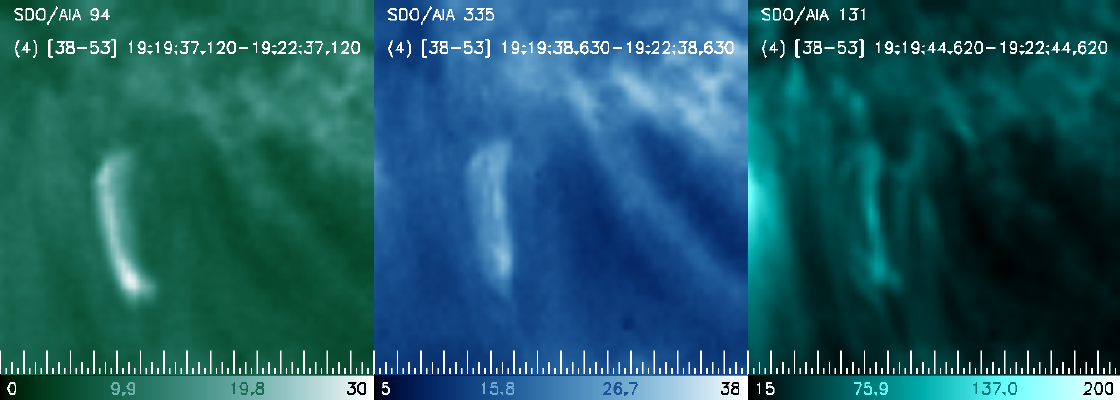}
    \caption{An isolated loop structure shown lighting up in 3 SDO/AIA passbands.  Each row corresponds to the intensities in AIA filter images, averaged over the time duration found by our method, going from interval $1$ (top) to interval $4$ (bottom).  The columns, going from left to right, show the 94, 335, and 131~\AA\ filter band images.  The filter name, time duration, and the image sequence indices are marked at the top of \rev{the} image and the intensity scale is marked at the bottom.  The grid at the bottom of each image denotes the pixelation, with each image having a size of $64{\times}64$ pixels.   Notice that the isolated loop becomes bright enough for detection in the $3^{\rm rd}$ interval.}
    \label{fig:AIA3obs}
\end{figure*}

\begin{figure*}[ht!]
    \centering
    \includegraphics[width=5in]{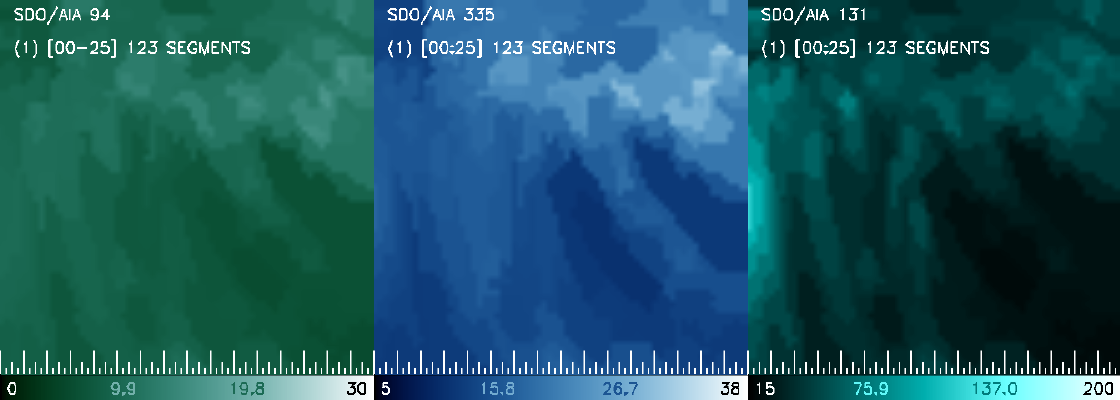}
    \includegraphics[width=5in]{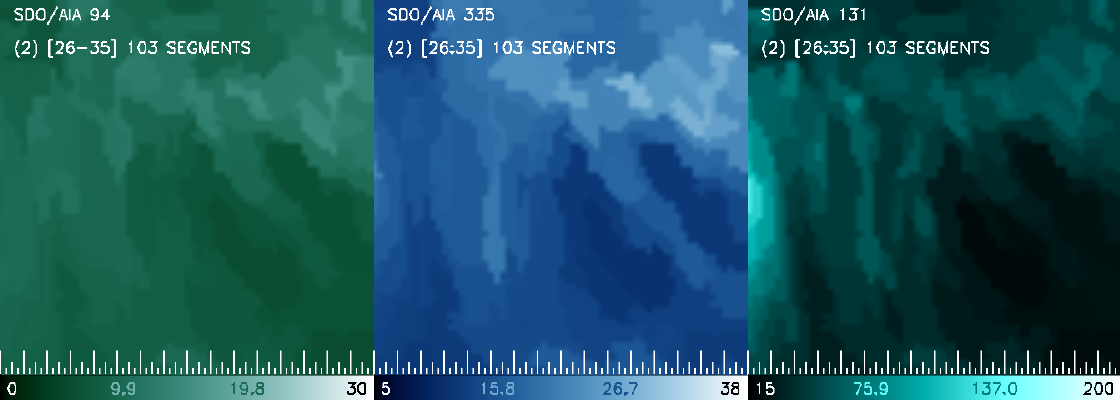}
    \includegraphics[width=5in]{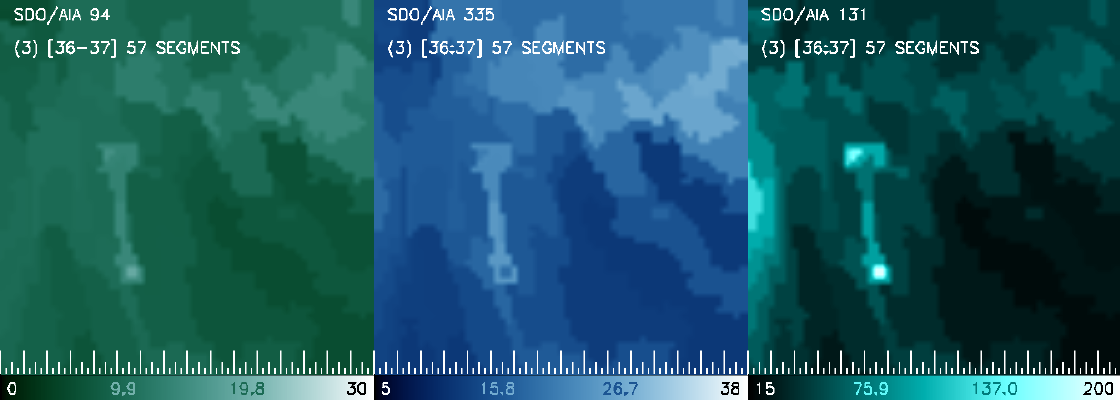}
    \includegraphics[width=5in]{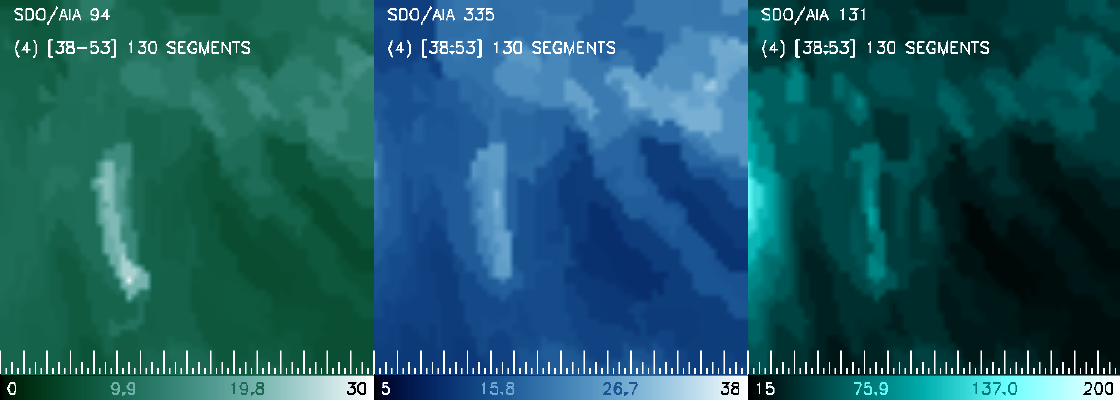}
    \caption{Intensities $\lambda_{i,t,w}$ as fit to the data from Figure~\ref{fig:AIA3obs} in spatial segments.  The images are arranged in the same manner, and demonstrate that the loop structure is locatable and identifiable.  The number of region segments found are also marked.}
    \label{fig:Solar_fitted_3band}
\end{figure*}

Images of the solar corona constitute a legitimate Big Data problem.  Several observatories have been collecting images in extreme ultra-violet (EUV) filters and in X-ray passbands for several decades, and analyzing them to pick out interesting changes using automated routines have been largely unsuccessful.  Catalogs like the HEK \citep[Heliophysics Events Knowledgebase][]{2012SoPh..275...67H,2012SoPh..275...79M} can detect and mark features of particular varieties, though these compilations remain beset by incompleteness \citep[see, e.g.,][]{2018ApJS..236...15A,2019JSWSC...9A..38H,2017IAUS..325..201B}.  In this context, our method provides a way to model solar features without limiting it to a particular feature set to identify and locate regions in images where something interesting has transpired.  As a proof of concept, we apply the method to a simple case of an isolated coronal loop filling with plasma, as observed with the Solar Dynamics Observatory's Atmospheric Imaging Assembly (SDO/AIA) filters \citep[][]{2012SoPh..275....3P,2012SoPh..275...17L}.  Considerable enhancements must still be made \rev{in order to lower the} computational cost before the method can be applied to full size images at faster than observed cadence; however, we demonstrate here that a well-defined region of interest can be selected without manual intervention for a dataset that consists of images in several filters.

\subsubsection{Data}

In particular, here we consider AIA observations carried out on 2014-Dec-11 between 19:12 UT and 19:23 UT, and focus on a $64{\times}64$ pixel region located $(+1'',-271'')$ from disk center, in which a small, isolated, well-defined loop appeared at approximately 19:19 UT.  This region was selected solely as a test case to demonstrate our method; the appearance of the loop is clear and unambiguous, with no other event occurring nearby to confuse the issue; see Figure~\ref{fig:AIA3obs}.  We apply our method to these data\rev{, downloaded using the SDO AIA Cutout Service,\footnote{\rev{\url{https://www.lmsal.com/get_aia_data/}}}} and demonstrate that the loop (and it alone) is detected and identified; see Figures~\ref{fig:Solar_fitted_3band}, \ref{fig:Solar_highighted_method3_4.1} and~\ref{fig:solar_keypix_lightcurves}.
AIA data are available in 6 filter bands, centered at 211, 94, 335, 193, 131, 171~\AA.  Here, we have limited our analysis to 3 bands: 94, 335, and 131~\AA{} in which the isolated loop is easily discernible to the eye (a full analysis including all the filters does not change the results).  Each filter consists of a sequence of 54 images, and while they are not obtained simultaneously, the difference in time between the bands is ignorable on the timescale over which the loop evolves.

\subsubsection{Results}

\begin{figure}[htb!]
    \centering
    \includegraphics[width=3.3in]{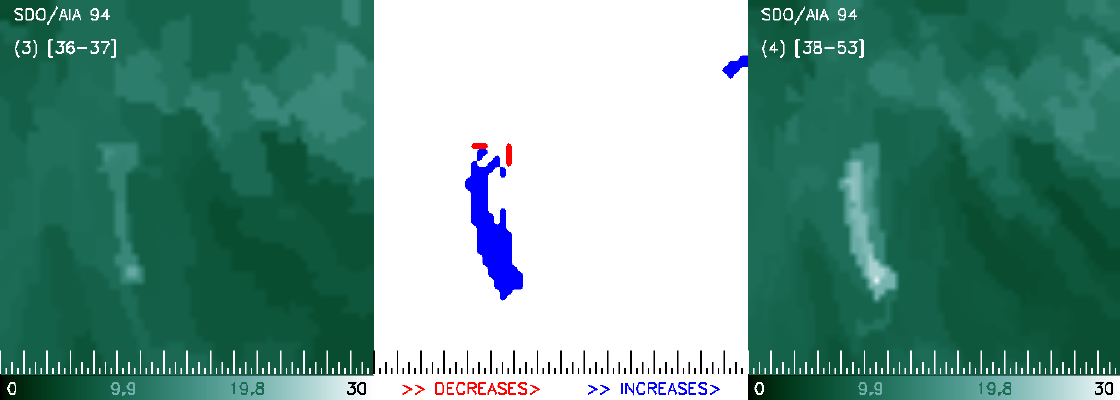}
    \includegraphics[width=3.3in]{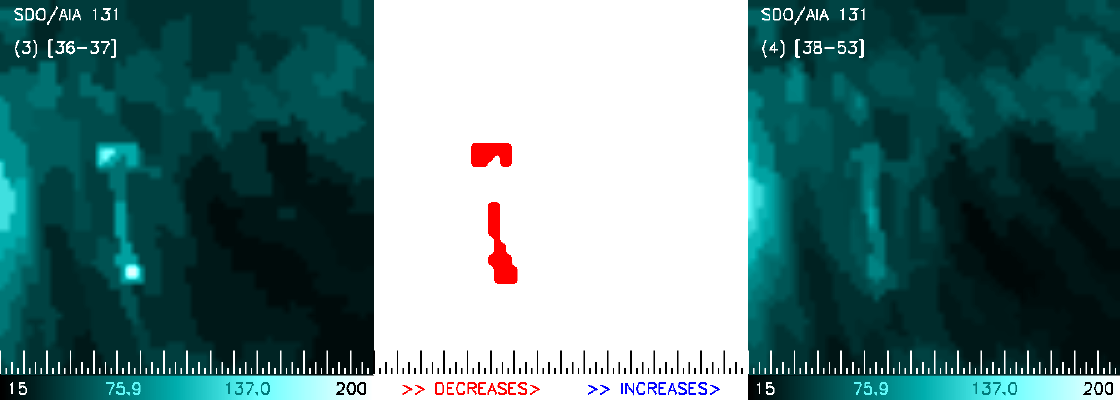}
    \caption{Demonstrating the isolation of key pixels of interest.  Each set of three shows the fitted intensity in one passband in the $3^{\rm rd}$ interval ({\sl left}), followed by a bitmap of pixels ({\sl middle}) showing where intensity increases (blue) and decreases (red), followed by the fitted intensity image in the same filter in the $4^{\rm th}$ time interval ({\sl right}).  The upper row shows the transition in the 94\,\AA\ filter, and the lower row shows the transition in the 131\,\AA\ filter.  Notice that the loop continues to brighten at 94\,\AA, even as it starts to fade at 131\,\AA.
    }
    \label{fig:Solar_highighted_method3_4.1}
\end{figure}

\begin{figure}[ht!]
    \centering
    \includegraphics[width=3.3in]{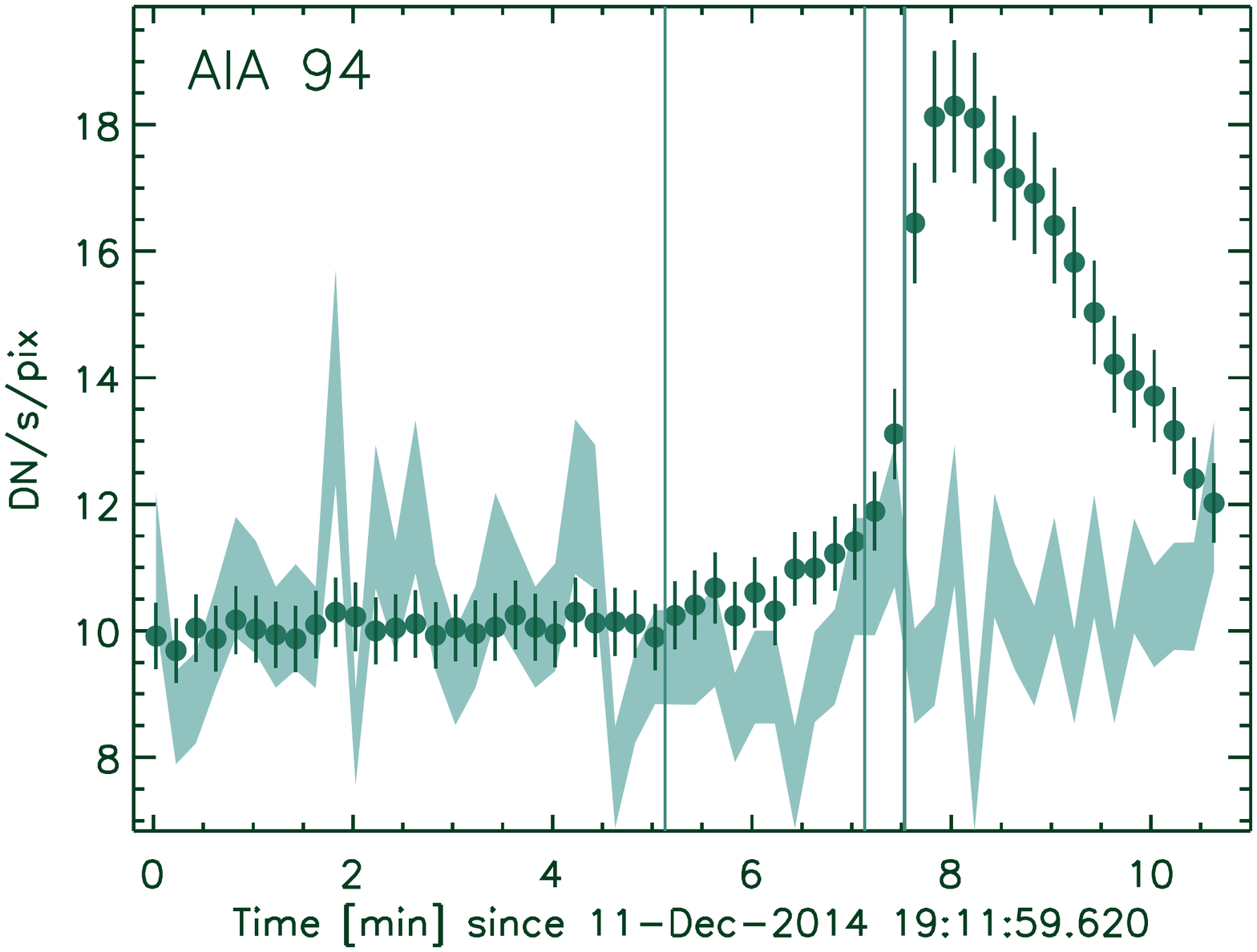}
    \includegraphics[width=3.3in]{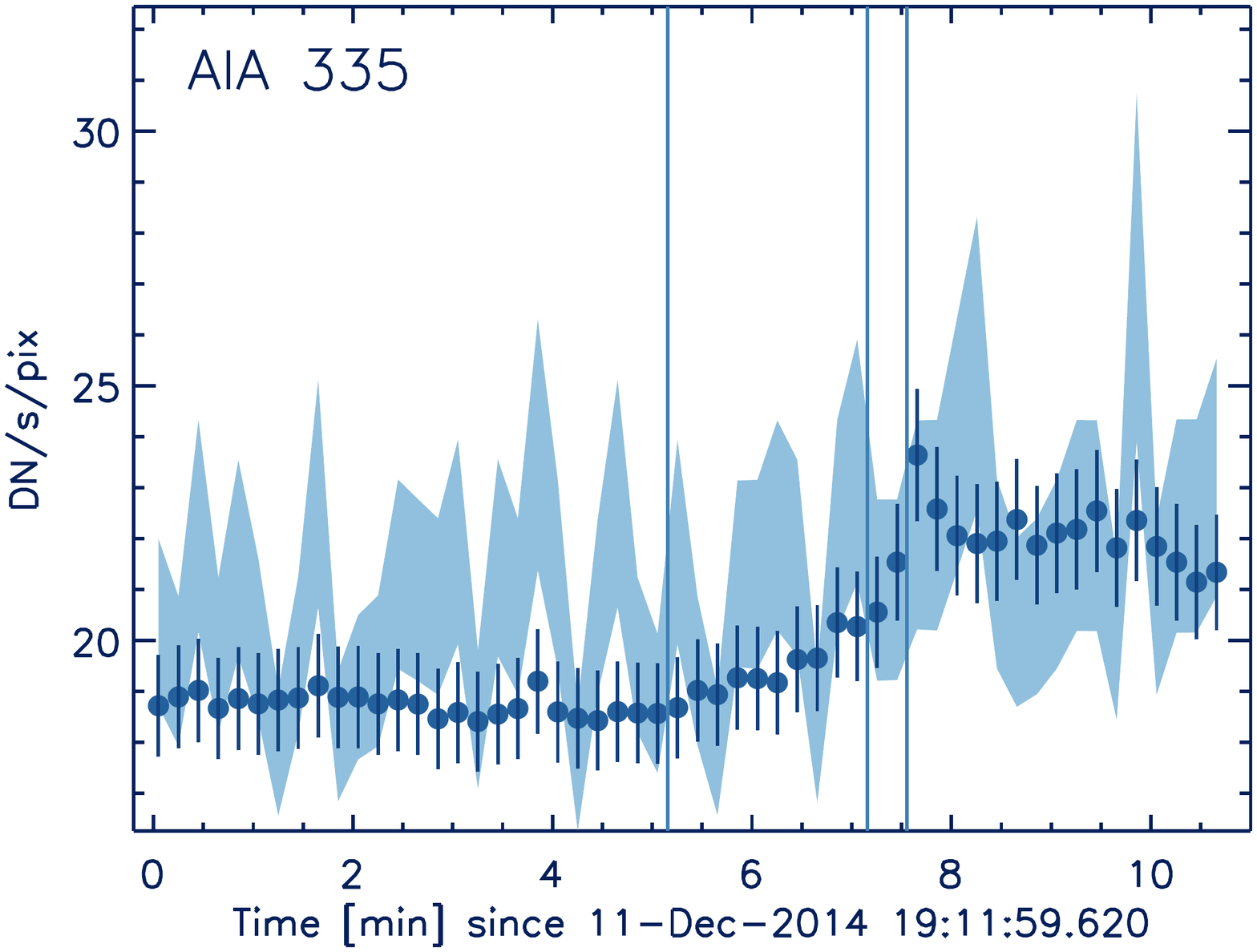}
    \includegraphics[width=3.3in]{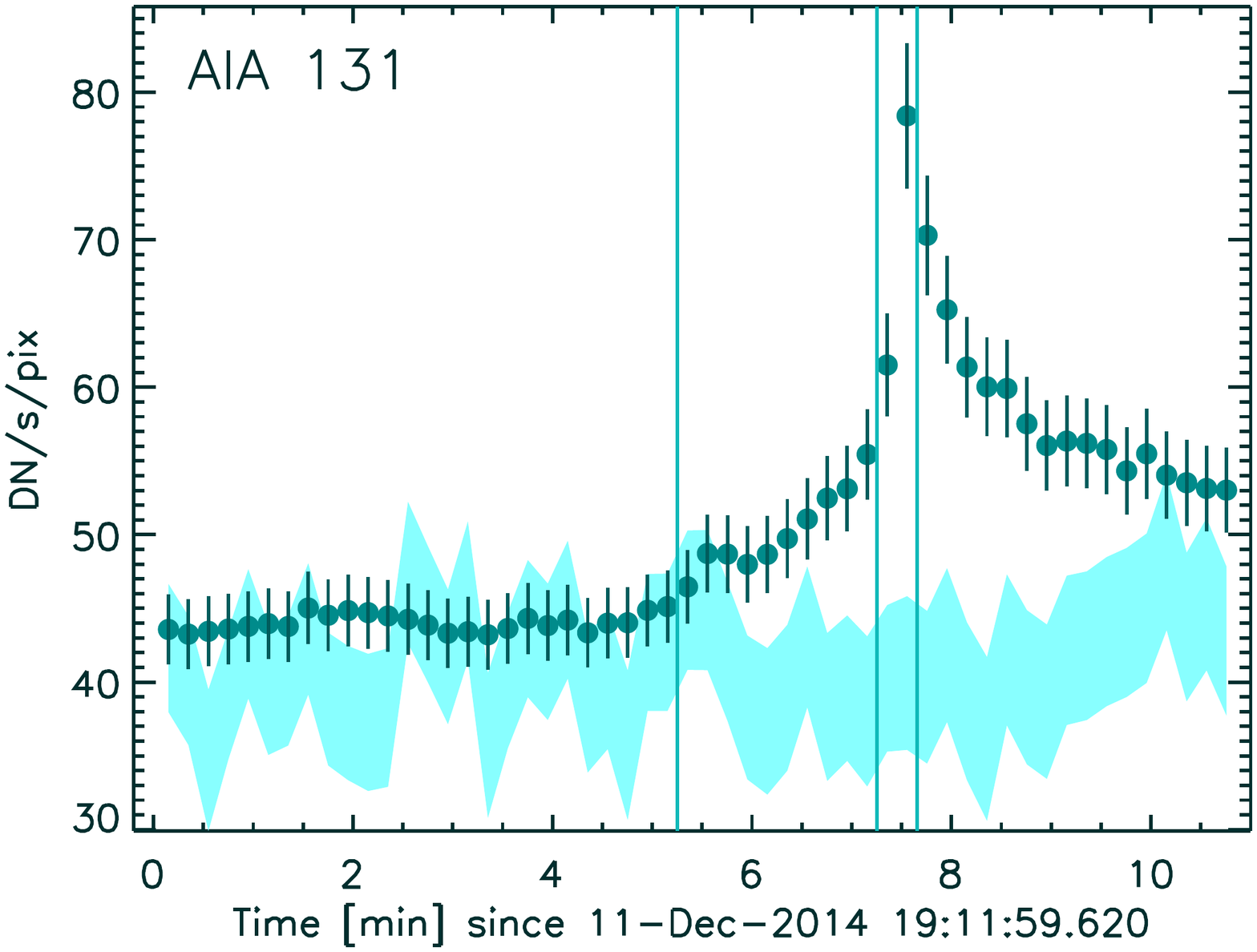}
    \caption{Light curves of the key pixels where changes are found, for the three filters used in the analysis: 94\,\AA\ ({\sl top}), 335\,\AA\ ({\sl middle}), and 131\,\AA\ ({\sl bottom}).  The average of the observed intensities, weighted by the number of times each pixel is flagged as a key pixel, are shown as dots, along with the similarly weighted sample standard deviation as vertical bars.  The shaded regions represent the envelope of the sample standard deviation seen {\sl outside} the flagged pixels.  The vertical lines denote the change points found by our algorithm.}
    \label{fig:solar_keypix_lightcurves}
\end{figure}

The fitted images for the 3-band case can be found in Figure~\ref{fig:Solar_fitted_3band}. Notice that there is a loop-shaped object that is of interest. Based on the fitted result, this object starts to appear at time point c.36 and becomes brighter after that for the first band. In the second band, this object appears at time point c.26 and stays bright throughout the duration considered. However in the third band, the object becomes bright at time point c.36 and vanishes soon after time point c.38. The proposed method is able to catch these changes in different bands and to detect the corresponding change points.

After detecting these change points, we find the key pixels that contribute to the change points using the methods in Section~\ref{subsubsection:highlight_difference}.
This method is appropriate to highlight the regions that change rapidly after the change point because different bands may not change in the same direction for this dataset. Here we apply this method on a single band, 94, as an example. We use $p = 10^{-15}$. See Figure~\ref{fig:Solar_highighted_method3_4.1} for an illustration. We find that the method could highlight the loop-shape object which starts to appear at time point c.36, and also detect the region that becomes much brighter after time point c.38. We also compute the light curves of the intensities in a region comprised of the set of pixels formed from the union of all key pixels found at all change points in all the filters; see Figure~\ref{fig:solar_keypix_lightcurves}.  Notice that the event of interest is fully incorporated within the key pixels, with no spillover into the background, and the change points are post facto found to be reasonably located from a temporal perspective in that they are located where a researcher seeking to manually place them would do so.  The first segment is characterized by steady emission in all three bands, the second segment shows the isolated loop beginning to form, the third segment catches the time when it reaches a peak, and the last segment tracks the slow decline in intensity.

\section{Summary}

We have developed an approach to model photon emissions by astronomical sources. Also, we propose a practical algorithm to detect the change points as well as to segment the astronomical images, based on the MDL principle for model selection. 
We test this method on a series of simulation experiments and apply it to two real astrophysical datasets. We are able to recover the time-evolving variations. 

Based on the results of simulation experiments, it is recommended that the average number of photon counts within each bin should be from 100 to 1000 for pixels belonging to an astrophysical object, so that the proposed method is able to find change points and limit false positives.

For future work, it will be helpful to quantify the evidence of the existence of a change point by deriving a test statistic based on Monte Carlo \rev{simulations} or other methods. Another possible extension is to relax the piecewise constant assumption and allow piecewise linear/quadratic modeling so that the method is able to capture more complicated and realistic patterns.

\acknowledgements
{Support for CX and TCML was partially provided by the National Science Foundation under grants DMS-1811405, DMS-1811661 and DMS-1916125.
Support for HMG was provided by the National Aeronautics and Space Administration through the Smithsonian Astrophysical Observatory contract SV3-73016 to MIT for Support of the Chandra X-Ray Center, which is operated by the Smithsonian Astrophysical Observatory for and on behalf of the National Aeronautics Space Administration under contract NAS8-03060.
VLK acknowledges support from NASA Contract NAS8-03060 to the Chandra X-ray Center. VLK thanks Durgesh Tripathi for the solar dataset used in this analysis.}

\facilities{XMM-Newton (EPIC), SDO (AIA)}

\software{CIAO \citep{2006SPIE.6270E..1VF}.
    astropy \citep{2013A&A...558A..33A,2018AJ....156..123A}.
    IDL \url{https://www.l3harrisgeospatial.com/Software-Technology/IDL}.
    SolarSoft \citep{2012ascl.soft08013F}.
    Automark \citep{wong2016}.
    \rev{4D\_Automark, a Python package for the proposed method can be downloaded from \url{https://github.com/kevinxucong/4D_Automark}. (doi:10.5281/zenodo.4451396)}
}

\clearpage
\appendix

\section{Statistical Consistency}
\label{appendix:A}

Here we prove that the MDL scheme is statistically consistent (see Section~\ref{sec:consistency}), thereby ensuring that the estimates of region segmentations and the Poisson intensities are reliable measures of the data. In the following, we assume that the size of the time bins $\Delta T_{t}=1, \forall 1 \leq t \leq \nt$, as $\nt$ increases to infinity. That is to say, first, we study that as these underlying nonhomogeneous Poisson processes are extending at the same rate, the size of the bins keeps fixed, which leads to increasing number of independent observations for any given part of this Poisson process. And by keeping the size of the bins fixed, we get rid of the case that the Poisson parameters keep varying as $\nt$ increases. Second, by setting $\Delta T_{t}=1$, the Poisson parameter is numerically equal to the Poisson rate, which ease the arguments. The proof can be extended if we relax this assumption. 

Given the above assumption, the photon counts have the following Poisson model,
\begin{equation}
    y_{i,t,w} \stackrel{i.i.d.}{\sim} \text{Poisson}(\lambda_{i,t,w}).
\end{equation}

Given change points $\bm{\tau} = (\tau_1, \tau_{2},..., \tau_{K})$, we set $\tau_{0}=0$ and $\tau_{K+1} = \nt$, and let $\nu_{k} = \tau_{k} / \nt, k=0,1,...\nt$ to be the normalized change points. The consistency results are based on $\nu_{k}$'s being fixed as $\nt$ increases.

The observed images within the same interval follow the same corresponding piecewise constant model given change points $\bm{\tau}$. Let $x_{{i,t-\tau_{k-1},w}}^{(k)} = y_{i,t,w}, \tau_{k-1}+1 \leq t \leq \tau_k$, $\bm{X}_{t-\tau_{k-1}}^{(k)} = \bm{Y}_{t} = \{y_{i,t,w} | i=1,...,\nx, w=1,...,\nw \}, \tau_{k-1}+1 \leq t \leq \tau_k$, and $\bm{X}^{(k)} = \{ \bm{X}_{t}^{(k)} | 1 \leq t \leq T_{k} \}$ where $T_{k} = \tau_{k} - \tau_{k-1}$.

Let $\lambda_{i,t,w}$'s within the $k^{\rm th}$ interval follow the same corresponding two-dimensional piecewise constant model with $m^{(k)}$ constant regions. The partition of the images within interval $k$ is specified by a region assignment function $R^{(k)}(.) \colon \{1,...,\nx \} \to \{ 1,...,m^{(k)} \}$. The Poisson parameters $\mu_{h,w}^{(k)}$ is the value for the $w^{\rm th}$ band in the $h^{\rm th}$ region of the $k^{\rm th}$ interval. Let $\mu^{(k)} = \{\mu_{h,w}^{(k)} | h=1,...,m^{(k)}, w = 1,...,\nw \}$. That is to say,
\begin{equation}
    \lambda_{i,t,w} =  \sum_{k=1}^{K+1} I_{ \{ t \in (\tau_{k-1},\tau_{k}] \} } \mu_{R^{(k)}(i), w}^{(k)} .
\end{equation}

In all, for pixel $i \in R_{h}^{(k)}$,
\begin{equation}
    x_{i,t,w}^{(k)} \stackrel{i.i.d.}{\sim} \text{Poisson}(\mu_{h,w}^{(k)}) .
\end{equation}

For each $1 \leq t \leq T_{k}$, the log-likelihood function for regions assignment $R^{(k)}$ and Poisson parameters $\mu^{(k)}$ is
\begin{equation}
    \tilde{l}_k ((R^{(k)}, \mu^{(k)}) ; \bm{X}_{t}^{(k)}) = \sum_{w=1}^{\nw}  \sum_{h=1}^{m^{(k)}} \sum_{i, s.t. R^{(k)}(i) = h} [x_{{i,t,w}}^{(k)} \log(\mu_{h,w}^{(k)}) - \mu_{h,w}^{(k)} - \log(x_{{i,t,w}}^{(k)} !)].
\end{equation}

As some of the terms in the log-likelihood function have nothing to do with the parameters to estimate, we remove these terms and write down the log-likelihood to be
\begin{equation}
    l_k ((R^{(k)}, \mu^{(k)}) ; \bm{X}_{t}^{(k)}) = \sum_{w=1}^{\nw}  \sum_{h=1}^{m^{(k)}} \sum_{i, s.t. R^{(k)}(i) = h} [x_{i,t,w}^{(k)} \log(\mu_{h,w}^{(k)}) - \mu_{h,w}^{(k)}].
\end{equation}

Define $\psi_{k} = (R^{(k)}, \mu^{(k)})$ to be the parameter set for the $k^{\rm th}$ interval, and $\bm{\mathcal{M}}$ to be the class of models $\psi_{k}$ can take value from. Then the log-likelihood for the $k^{\rm th}$ interval can be written as

\begin{equation}
    L_{T}^{(k)} (\psi_{k}; \bm{X}^{(k)}) = \sum_{t = 1}^{T_{k}} l_k ((R^{(k)}, \mu^{(k)}) ; \bm{X}_{t}^{(k)}).
\end{equation}

Let $\bm{\nu} = (\nu_{1},...,\nu_{K})$ be the normalized change point location vector, and $\bm{\psi} = (\psi_{1},...,\psi_{K+1})$ be the parameter vector. Then vector $(K, \bm{\nu}, \bm{\psi})$ can specify a model for this sequence of images. The MDL is derived to be

\begin{eqnarray}
\label{eqn:mdl}
    \text{MDL}(K, \bm{\nu}, \bm{\psi}) =&& K \log(\nt) + \sum_{k=1}^{K+1} [ m^{(k)} \log(\nx) + \frac{\log(3)}{2} \sum_{h=1}^{m^{(k)}} b_{h}^{(k)} + \frac{\nw}{2} \sum_{h=1}^{m^{(k)}} \log(([T \nu_{k}] - [T \nu_{k-1}] + 1) a_h^{(k)})] \nonumber \\
    &-& \sum_{k=1}^{K+1} L_{T}^{(k)} (\psi_{k}; \bm{X}^{(k)}).
\end{eqnarray}
Here the \rev{``}area" (number of pixels) and \rev{``}perimeter" (number of pixel edges) of region $R_{h}^{(k)}$ are denoted by $a_{h}^{(k)}$ and $b_{h}^{(k)}$.

In order to make sure that the change points are identifiable, we assume that there exists a $\epsilon_{\nu} > 0$ such that $\min_{1 \leq k \leq K+1} | \nu_{k} - \nu_{k-1} | > \epsilon_{\nu}$. Therefore, the number of change points is bounded by $K \leq [1/\epsilon_{\nu}] + 1$. And there exists a constraint $A_{\epsilon_{\nu}}^{K}$ of $\bm{\nu}$ where

\begin{equation}
    A_{\epsilon_{\nu}}^{K} = \{ \bm{\nu} \in (0,1)^{K} | 0 < \nu_{1} <...<\nu_{K} < 1, \nu_{k} - \nu_{k-1}  > \epsilon_{\nu}, \forall 1 \leq k \leq K+1 \} .
\end{equation}

Then the estimation of the model based on MDL is given by

\begin{equation}
\label{eqn:MDL_based}
    (\hat{K}_{T}, \hat{\bm{\nu}}_{T}, \hat{\bm{\psi}}_{T}) = \text{arg} \min_{K \leq [1/\epsilon_{\nu}] + 1, \bm{\nu} \in A_{\epsilon_{\nu}}^{K}, \bm{\psi} \in \bm{\mathcal{M}}} \frac{1}{\nt} \text{MDL}(K, \bm{\nu}, \bm{\psi}) .
\end{equation}
Here $\text{MDL}(K, \bm{\nu}, \bm{\psi})$ is defined in~(\ref{eqn:mdl}), $\hat{\bm{\nu}}_{T} = (\hat{\nu}_{1}, ..., \hat{\nu}_{\hat{K}})$ and $\hat{\bm{\psi}}_{T} = (\hat{\psi}_{1},...,\hat{\psi}_{\hat{K}+1})$, where $\hat{\psi}_{k} = (\hat{R}^{(k)}, \hat{\mu}^{(k)})$. And $\hat{\mu}^{(k)}$ is defined as 
\begin{equation}
    \hat{\mu}^{(k)} = \text{arg} \max_{\mu^{(k)} \in \Theta_{k}(\hat{R}^{(k)})} L_{T}^{(k)} ((\hat{R}^{(k)}, \mu^{(k)}); \hat{\bm{X}}^{(k)})
\end{equation}
with $\hat{\bm{X}}^{(k)} = \{ \bm{Y}_{t} | [T \hat{\nu}_{k-1}] < t \leq [T \hat{\nu}_{k}] \}$ denotes the estimated $k^{\rm th}$ interval of the sequence of images.

We further define the log-likelihood formed by a portion of the observations in the $k^{\rm th}$ interval by
\begin{equation}
    L_{T}^{(k)} (\psi_{k}, \nu_{d}, \nu_{u}; \bm{X}^{(k)}) = \sum_{t = [T_{k} \nu_{d}]+1}^{[T_{k} \nu_{u}]} l_k ((R^{(k)}, \mu^{(k)}) ; \bm{X}_{t}^{(k)}) ,
\end{equation}
where $0 \leq \nu_{d} < \nu_{u} \leq 1$ and $\nu_{u} - \nu_{d} > \epsilon_{\nu}$.

We denote 
\begin{equation}
    \sup_{\nu_{d}, \nu_{u}} \coloneqq \sup_{0 \leq \nu_{d} < \nu_{u} \leq 1, \nu_{u} - \nu_{d} > \epsilon_{\nu}}
\end{equation}
to simplify the notation.

In this setting, an extension need to be made such that $\nu_{d}$ and $\nu_{u}$ can be slightly outside $[0,1]$. It means that the $k^{\rm th}$ estimated interval could cover a part of the observations that belong to the $(k-1)^{\rm th}$ and $(k+1)^{\rm th}$ true intervals. Based on the formula (3.4) in \citet{davis_yau_2013}, for a real-value function $f_T(\nu_{d}, \nu_{u})$ on $\mathcal{R}^2$,

\begin{equation}
\label{eqn:supremum}
    \sup_{\underline{\nu_{d}}, \overline{\nu_{u}}} f_T(\nu_{d}, \nu_{u}) \xrightarrow{a.s.} 0
\end{equation}
is used to denote
\begin{equation}
    \sup_{-h_{T} < \nu_{d} < \nu_{u} < 1+r_{T}, \nu_{u} - \nu_{d} > \epsilon_{\nu}} f_T(\nu_{d}, \nu_{u}) \xrightarrow{a.s.} 0
\end{equation}
for any pre-specified positive-valued sequences $h_{T}$ and $r_{T}$, which cover to 0 as $\nt \rightarrow \infty$.


The following assumptions on true Poisson parameters $\mu_{h,w}^{o(k)}, 1 \leq h \leq m^{(k)}, 1 \leq w \leq \nw, 1\leq k \leq (K^{o}+1)$ are necessary.

\begin{assumption}
\label{assumption:1}
\begin{equation}
    0 < C_{d} \coloneqq \min_{k,h,w} \mu_{h,w}^{o(k)} \leq C_{u} \coloneqq \max_{k,h,w} \mu_{h,w}^{o(k)} < \infty .
\end{equation}
\end{assumption}

\begin{assumption}
\label{assumption:2}
For two true neighboring regions $R_{p}^{(k)}$ and $R_{q}^{(k)}$ at the $k^{\rm th}$ interval,
\begin{equation}
    \delta_{1} \coloneqq \min_{k,p,q,w} | \mu_{p,w}^{o(k)} - \mu_{q,w}^{o(k)} | > 0 .
\end{equation}
\end{assumption}

\begin{assumption}
\label{assumption:3}
For any two neighboring intervals $(k-1)$ and $k$
\begin{equation}
    \delta_{2} \coloneqq \min_{k} \max_{i, w} | \mu_{R^{(k)}(i), w}^{o(k)} - \mu_{R^{(k-1)}(i), w}^{o(k-1)} | > 0 .
\end{equation}
\end{assumption}


\begin{proposition}[v]
\label{prop_1}
For $k=1,...,K+1$ and any fixed $R^{(k)}$, there exists a $\epsilon>0$ such that,
\begin{equation}
\begin{split}
    \sup_{\mu^{(k)} \in \Theta_{k}(R^{(k)})} E | l_k ((R^{(k)}, \mu^{(k)}) ; \bm{X}_{t}^{(k)}) |^{v+\epsilon} < \infty , \\
    \sup_{\mu^{(k)} \in \Theta_{k}(R^{(k)})} E | l_k' ((R^{(k)}, \mu^{(k)}) ; \bm{X}_{t}^{(k)}) |^{v+\epsilon} < \infty , \\
    \sup_{\mu^{(k)} \in \Theta_{k}(R^{(k)})} E | l_k'' ((R^{(k)}, \mu^{(k)}) ; \bm{X}_{t}^{(k)}) | < \infty .
\end{split}
\end{equation}
\end{proposition}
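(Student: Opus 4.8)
The plan is to exploit the fact that, once the factorial term has been dropped, $l_k$ is an \emph{affine} function of the observed counts whose coefficients depend on $\mu^{(k)}$ only through bounded quantities, so that all three bounds reduce to the elementary fact that a Poisson variable has finite moments of every order. First I would collapse the inner pixel sums by introducing the region totals $S_{h,w}^{(k)} = \sum_{i\,:\,R^{(k)}(i)=h} x_{i,t,w}^{(k)}$, giving
\[
l_k((R^{(k)},\mu^{(k)});\bm{X}_t^{(k)}) = \sum_{w=1}^{\nw}\sum_{h=1}^{m^{(k)}}\bigl[S_{h,w}^{(k)}\log\mu_{h,w}^{(k)} - a_h^{(k)}\mu_{h,w}^{(k)}\bigr].
\]
Differentiating in the Poisson rates gives $\partial l_k/\partial\mu_{h,w}^{(k)} = S_{h,w}^{(k)}/\mu_{h,w}^{(k)} - a_h^{(k)}$ and $\partial^2 l_k/\partial(\mu_{h,w}^{(k)})^2 = -S_{h,w}^{(k)}/(\mu_{h,w}^{(k)})^2$, with all mixed second partials vanishing because $l_k$ separates across the $(h,w)$ pairs. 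Thus $l_k$, $l_k'$, and $l_k''$ are each a finite linear combination of the $S_{h,w}^{(k)}$ with coefficients drawn from the list $\{\log\mu_{h,w}^{(k)},\,\mu_{h,w}^{(k)},\,1/\mu_{h,w}^{(k)},\,1/(\mu_{h,w}^{(k)})^2\}$, together with the deterministic offset $a_h^{(k)}$.

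The single probabilistic input I would invoke is that, under the true data-generating law, $S_{h,w}^{(k)}$ is a sum of $a_h^{(k)}$ i.i.d.\ $\text{Poisson}(\mu_{h,w}^{o(k)})$ variables, hence itself $\text{Poisson}(a_h^{(k)}\mu_{h,w}^{o(k)})$, with mean lying in the fixed interval $[a_h^{(k)}C_d,\,a_h^{(k)}C_u]$ by Assumption~\ref{assumption:1}. Since a Poisson variable has finite moments of every order, $E|S_{h,w}^{(k)}|^{p}<\infty$ for all $p>0$; in particular one may take $\epsilon=1$ (indeed any $\epsilon>0$) and conclude $\|S_{h,w}^{(k)}\|_{v+\epsilon}<\infty$, where $\|\cdot\|_{p}$ denotes the $L^p$ norm. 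The key observation is that this norm depends only on the \emph{fixed true} rate $\mu_{h,w}^{o(k)}$, and not at all on the candidate $\mu^{(k)}$ at which the likelihood is evaluated.

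I would then assemble the bounds by Minkowski's inequality in $L^{v+\epsilon}$; for the log-likelihood,
\[
\bigl\|l_k\bigr\|_{v+\epsilon} \le \sum_{w=1}^{\nw}\sum_{h=1}^{m^{(k)}}\Bigl(\bigl|\log\mu_{h,w}^{(k)}\bigr|\,\bigl\|S_{h,w}^{(k)}\bigr\|_{v+\epsilon} + a_h^{(k)}\bigl|\mu_{h,w}^{(k)}\bigr|\Bigr),
\]
and analogously for $l_k'$ with $1/\mu_{h,w}^{(k)}$ replacing $\log\mu_{h,w}^{(k)}$. Taking the supremum over $\mu^{(k)}\in\Theta_k(R^{(k)})$ then reduces to controlling the coefficient factors, for which I would take $\Theta_k(R^{(k)})$ to be compact with rates bounded away from $0$ and $\infty$, so that $\sup_{\mu}|\log\mu_{h,w}^{(k)}|$, $\sup_{\mu}\mu_{h,w}^{(k)}$, and $\sup_{\mu}1/\mu_{h,w}^{(k)}$ are all finite. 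Because $R^{(k)}$ is fixed the double sum has a fixed finite number of terms, so the uniform bounds on $l_k$ and $l_k'$ follow immediately. The second-derivative bound is the easiest, since only a first absolute moment is asked for: $\sup_{\mu}E|l_k''| \le \sum_{h,w}\sup_{\mu}(\mu_{h,w}^{(k)})^{-2}\,E|S_{h,w}^{(k)}| < \infty$.

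The hard part, such as it is, will be conceptual rather than computational: keeping straight that the expectation is taken under the true Poisson law while the likelihood and its derivatives are evaluated at an \emph{arbitrary} candidate $\mu^{(k)}$. Once that separation is made explicit the randomness and the parameter dependence decouple entirely — the random factors $S_{h,w}^{(k)}$ have all moments finite because of the true rates, and the deterministic coefficient factors are bounded because $\Theta_k$ is compact and bounded away from zero. The only genuine requirement beyond Assumption~\ref{assumption:1} is that $\Theta_k(R^{(k)})$ keep the rates strictly positive and bounded, which is precisely what renders the negative powers of $\mu_{h,w}^{(k)}$ appearing in $l_k'$ and $l_k''$ harmless.
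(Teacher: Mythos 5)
Your proposal is correct and takes essentially the same route as the paper: the paper disposes of this proposition in a single sentence, invoking exactly the two ingredients you develop in detail --- compactness of the parameter space $\Theta_k(R^{(k)})$ with rates bounded away from $0$ and $\infty$ (its reading of Assumption~1), and finiteness of all moments of the Poisson counts --- so your write-up is a fleshed-out version of the paper's own argument. One cosmetic correction: since the fixed $R^{(k)}$ need not refine the true segmentation $R^{o(k)}$, pixels within a single candidate region may carry different true rates, so $S_{h,w}^{(k)}$ is Poisson with mean in $[a_h^{(k)}C_d,\,a_h^{(k)}C_u]$ but is not in general a sum of i.i.d.\ $\text{Poisson}(\mu_{h,w}^{o(k)})$ variables; your bounds only use the former fact, so nothing in the argument breaks.
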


This proposition holds for $v=1,2,4$ due to the compactness of parameter space (Assumption \ref{assumption:1}) and bounded $E[(x_{i,t,w}^{(k)})^{v+\epsilon}]$.

\begin{proposition}
For $k=1,...,K+1$ and any fixed $R^{(k)}$,
\begin{equation}
\begin{split}
    \sup_{\mu^{(k)} \in \Theta_{k}(R^{(k)})} | \frac{1}{\nt (\nu_{k} - \nu_{k-1})} L_{T}^{(k)} ((R^{(k)}, \mu^{(k)}); \bm{X}^{(k)}) - L_{k}((R^{(k)}, \mu^{(k)})) | \xrightarrow{a.s.} 0 ,\\
    \sup_{\mu^{(k)} \in \Theta_{k}(R^{(k)})} | \frac{1}{\nt (\nu_{k} - \nu_{k-1})} L_{T}'^{(k)} ((R^{(k)}, \mu^{(k)}); \bm{X}^{(k)}) - L_{k}'((R^{(k)}, \mu^{(k)})) | \xrightarrow{a.s.} 0 ,\\
    \sup_{\mu^{(k)} \in \Theta_{k}(R^{(k)})} | \frac{1}{\nt (\nu_{k} - \nu_{k-1})} L_{T}''^{(k)} ((R^{(k)}, \mu^{(k)}); \bm{X}^{(k)}) - L_{k}''((R^{(k)}, \mu^{(k)})) | \xrightarrow{a.s.} 0 ,
\end{split}
\end{equation}

where
\begin{equation}
\begin{split}
    L_{k}((R^{(k)}, \mu^{(k)})) \coloneqq E(l_k ((R^{(k)}, \mu^{(k)}) ; \bm{X}_{t}^{(k)})) ,\\
    L_{k}'((R^{(k)}, \mu^{(k)})) \coloneqq E(l_k' ((R^{(k)}, \mu^{(k)}) ; \bm{X}_{t}^{(k)})) ,\\
    L_{k}''((R^{(k)}, \mu^{(k)})) \coloneqq E(l_k'' ((R^{(k)}, \mu^{(k)}) ; \bm{X}_{t}^{(k)})) .
\end{split}
\end{equation}
\end{proposition}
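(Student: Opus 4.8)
The plan is to read all three displays as a single \emph{uniform} strong law of large numbers (SLLN) over the compact parameter set $\Theta_k(R^{(k)})$, each established by combining a pointwise SLLN with an equicontinuity argument that compactness makes available. I would prove the first display in full and then note that the two derivative statements are identical once $l_k$ is replaced by $l_k'$ and $l_k''$ and the matching moment bounds of Proposition~\ref{prop_1} are invoked.

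First I would record that within the fixed interval $k$ the summands are i.i.d.\ across time: since $\Delta T_t = 1$, the images $\bm{X}_t^{(k)}$, $1 \le t \le T_k$, are i.i.d.\ realizations of the same piecewise-constant Poisson model, so for each fixed $(R^{(k)}, \mu^{(k)})$ the per-time contributions $l_k((R^{(k)}, \mu^{(k)}); \bm{X}_t^{(k)})$ are i.i.d. Moreover $\nt(\nu_k - \nu_{k-1}) = \tau_k - \tau_{k-1} = T_k$ exactly, so the normalized log-likelihood is literally a sample mean,
\begin{equation}
\frac{1}{\nt(\nu_k - \nu_{k-1})} L_T^{(k)}((R^{(k)}, \mu^{(k)}); \bm{X}^{(k)}) = \frac{1}{T_k} \sum_{t=1}^{T_k} l_k((R^{(k)}, \mu^{(k)}); \bm{X}_t^{(k)}),
\end{equation}
with $T_k = \nt(\nu_k - \nu_{k-1}) \to \infty$ because $\nu_k - \nu_{k-1} > \epsilon_\nu$ is fixed. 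For each fixed $\mu^{(k)} \in \Theta_k(R^{(k)})$, Proposition~\ref{prop_1} with $v = 1$ gives $E|l_k| < \infty$, so Kolmogorov's SLLN yields $\frac{1}{T_k}\sum_{t=1}^{T_k} l_k \xrightarrow{a.s.} E(l_k) = L_k((R^{(k)}, \mu^{(k)}))$.

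The main work, and the main obstacle, is to upgrade this pointwise convergence to convergence that is uniform in $\mu^{(k)}$, since a priori the exceptional null set on which the SLLN fails may depend on $\mu^{(k)}$ and there are uncountably many admissible values. Here I would exploit the additively separable form $l_k = \sum_{w,h}\sum_{i:\,R^{(k)}(i)=h}[x_{i,t,w}^{(k)} \log \mu_{h,w}^{(k)} - \mu_{h,w}^{(k)}]$. On $\Theta_k(R^{(k)})$, which by Assumption~\ref{assumption:1} is compact and bounded away from $0$ and $\infty$, the map $\mu^{(k)} \mapsto l_k$ is continuous, and for $\mu, \mu'$ in $\Theta_k$ its increment is bounded by an affine function of the counts $x_{i,t,w}^{(k)}$ times $|\mu - \mu'|$, whose envelope has finite mean by the same bounded count moments that underlie Proposition~\ref{prop_1}. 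Compactness then permits a finite cover of $\Theta_k$ by small balls; on each ball the oscillation of the sample mean is dominated by the sample mean of that envelope, which by the SLLN is a.s.\ close to a constant that shrinks with the radius. Combining the finitely many pointwise SLLNs at the ball centers with this uniform oscillation bound yields the uniform a.s.\ convergence. This is precisely the classical compact-parameter uniform SLLN, and the reduction to a finite subcover is the same device used in the corresponding step of \citet{davis_yau_2013}.

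The two derivative statements then follow verbatim. The derivatives with respect to the rate parameters, $l_k' = \sum_{w,h}\sum_{i:\,R^{(k)}(i)=h}[x_{i,t,w}^{(k)}/\mu_{h,w}^{(k)} - 1]$ and $l_k'' = -\sum_{w,h}\sum_{i:\,R^{(k)}(i)=h} x_{i,t,w}^{(k)}/(\mu_{h,w}^{(k)})^2$, are again continuous on the compact set $\Theta_k$ with finite-mean envelopes, since Proposition~\ref{prop_1} supplies $E|l_k'|^{1+\epsilon} < \infty$ and $E|l_k''| < \infty$. The identical pointwise-plus-equicontinuity argument therefore establishes the uniform a.s.\ convergence of the normalized first and second derivatives to $L_k'$ and $L_k''$.
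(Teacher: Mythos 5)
Your proof is correct and is essentially the argument the paper relies on: the paper states this proposition with no written proof of its own, implicitly deferring (as it does explicitly for Lemma~1) to Propositions 1 and 2 of \citet{davis_yau_2013}, whose argument is exactly your pointwise SLLN plus compactness/equicontinuity upgrade built on the moment bounds of the preceding proposition. Your write-up merely supplies the details the paper leaves implicit --- the i.i.d.-within-interval structure, the identity $\nt(\nu_k - \nu_{k-1}) = T_k$, and the Lipschitz envelope over the compact $\Theta_k(R^{(k)})$ --- so no gap remains.
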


The estimated locations of change points are used to define the likelihood in practice. Therefore, the two ends of the $k^{\rm th}$ interval might contain observations from the $(k-1)^{\rm th}$ and $(k+1)^{\rm th}$ true intervals, though the estimated change points are close to the true change points. It is necessary to control the effect at the two ends of the fitted interval.

\begin{proposition}[w]
\label{prop_3}
For $k=1,...,K+1$ and any fixed $\psi$ and any sequence of integers $\{ g(\nt) \}_{\nt \geq 1} $ that satisfies $g(\nt) > c \nt^{w}$ for some $c>0$ when $\nt$ is large enough, then

\begin{equation}
\begin{split}
    \frac{1}{g(\nt)} \sum_{t=\nt-g(\nt)+1}^{\nt} l_k (\psi ; \bm{X}_{t}^{(k)}) \xrightarrow{a.s.} E(l_k (\psi ; \bm{X}_{t}^{(k)})) ,\\
    \frac{1}{g(\nt)} \sum_{t=\nt-g(\nt)+1}^{\nt} l_k' (\psi ; \bm{X}_{t}^{(k)}) \xrightarrow{a.s.} E(l_k' (\psi ; \bm{X}_{t}^{(k)})) .
\end{split}
\end{equation}
\end{proposition}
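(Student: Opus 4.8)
The plan is to exploit the fact that, for a \emph{fixed} parameter $\psi=(R^{(k)},\mu^{(k)})$, the summands $l_k(\psi;\bm{X}_t^{(k)})$ form an i.i.d.\ sequence in $t$. Within the $k^{\rm th}$ interval the image stacks $\bm{X}_t^{(k)}$ are independent and identically distributed (each pixel is an independent $\text{Poisson}(\mu_{h,w}^{o(k)})$), so for fixed $\psi$ the quantity $l_k(\psi;\bm{X}_t^{(k)})$ is an affine function of i.i.d.\ Poisson counts, hence an i.i.d.\ real random variable with finite mean $E\big(l_k(\psi;\bm{X}_t^{(k)})\big)$. (Fixing $\psi$ is exactly what turns this into a genuine i.i.d.\ problem, in contrast to the uniform statement of the preceding proposition.) Writing $Z_t=l_k(\psi;\bm{X}_t^{(k)})-E\big(l_k(\psi;\bm{X}_t^{(k)})\big)$, the goal becomes showing that the window average $g(\nt)^{-1}\sum_{t=\nt-g(\nt)+1}^{\nt}Z_t\to 0$ almost surely, and identically for $l_k'$.

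The natural first move---writing the window sum as a difference $S_{\nt}-S_{\nt-g(\nt)}$ of two prefix sums and applying the ordinary strong law to each---fails, because when $g(\nt)=o(\nt)$ the normalization $\nt/g(\nt)$ blows up and multiplies a deviation that the SLLN only controls at rate $o(1)$. Instead I would bound each window directly. Since the $Z_t$ are i.i.d.\ and mean zero, the window sum has the same distribution as $\sum_{t=1}^{g(\nt)}Z_t$, so for any $\delta>0$ Markov's inequality applied to an even moment of order $2m$ gives
\begin{equation}
P\Big(\big|\textstyle\sum_{t=\nt-g(\nt)+1}^{\nt}Z_t\big|>\delta\, g(\nt)\Big)\le \frac{E\big|\sum_{t=1}^{g(\nt)}Z_t\big|^{2m}}{\delta^{2m}\,g(\nt)^{2m}}.
\end{equation}
By Rosenthal's inequality (or a direct expansion) together with the finiteness of $E|Z_t|^{2m}$, the numerator is $O\big(g(\nt)^{m}\big)$, so the right-hand side is $O\big(g(\nt)^{-m}\big)$; invoking the hypothesis $g(\nt)>c\,\nt^{w}$ makes this $O\big(\nt^{-mw}\big)$.

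The remaining step is the first Borel--Cantelli lemma (no independence across $\nt$ is needed, only summability): the probabilities above are summable over $\nt$ as soon as $mw>1$, whence almost surely the window average exceeds $\delta$ only finitely often, and letting $\delta\downarrow 0$ along a countable sequence yields the claimed almost-sure convergence. The integrability is furnished by Proposition~\ref{prop_1} (with $v=4$), and in fact more than enough is available, because $l_k(\psi;\cdot)$ and $l_k'(\psi;\cdot)$ are affine in Poisson counts and therefore possess finite moments of \emph{every} order; thus, given any $w>0$, one simply picks the moment order $2m$ with $m>1/w$. The argument for $l_k'$ is verbatim the same, as $l_k'$ is likewise affine in the counts with all moments finite.

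I expect the main obstacle to be conceptual rather than computational: recognizing that this is an increment (delayed-sum) statement whose left endpoint $\nt-g(\nt)+1$ moves with $\nt$ and whose window may be a vanishing fraction of the whole, so that no single application of the classical SLLN suffices, and that the polynomial growth condition $g(\nt)>c\,\nt^{w}$ is precisely the hypothesis that renders the Borel--Cantelli series convergent. Getting the interplay between the moment order $2m$ and the exponent $w$ right (ensuring $mw>1$) is the one quantitative point that must be checked; everything else reduces to the i.i.d.\ moment bound already guaranteed by the Poisson structure and Proposition~\ref{prop_1}.
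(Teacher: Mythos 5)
Your proof is correct, but it takes a genuinely different route from the paper. The paper disposes of Proposition~\ref{prop_3} by citation: it invokes Lemma~1 of \citet{davis_yau_2013}, which gives a strong law for delayed (windowed) sums of strongly mixing sequences, and then verifies the two hypotheses of that lemma, namely the moment bounds from Proposition~\ref{prop_1}(2) and the mixing condition (Assumption~4* of \citet{davis_yau_2013}), the latter holding trivially because an independent sequence is automatically mixing. You instead give a self-contained argument that exploits independence directly: reduce to the centered variables $Z_t$, bound $P\bigl(\bigl|\sum_{t=\nt-g(\nt)+1}^{\nt}Z_t\bigr|>\delta\,g(\nt)\bigr)$ by Markov's inequality at an even moment of order $2m$, control that moment by Rosenthal's inequality to get $O\bigl(g(\nt)^{-m}\bigr)=O\bigl(\nt^{-mw}\bigr)$, and close with Borel--Cantelli by choosing $m>1/w$. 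Both arguments are sound; your diagnosis of why a naive difference-of-prefix-sums SLLN argument fails for small $w$ is also accurate, and your observation that the affine-in-Poisson-counts structure supplies moments of every order (more than Proposition~\ref{prop_1} guarantees) is what licenses the free choice of $m$. What the paper's route buys is brevity and generality: it slots the result into the Davis--Yau framework on which the entire appendix is patterned, and it would survive if the independence assumption were weakened to mixing. What your route buys is transparency and self-containedness: it needs no external mixing-theory lemma, and it makes explicit exactly how the polynomial window-growth hypothesis $g(\nt)>c\,\nt^{w}$ interacts with the moment order to produce summability. The one point to state carefully in a final write-up is that the distributional identity between the moving window sum and the prefix sum $\sum_{t=1}^{g(\nt)}Z_t$ is used only to compute the probability of each fixed-$\nt$ event, which is all Borel--Cantelli requires; as you note, no independence across different values of $\nt$ is needed.
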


Based on Lemma 1 in \citet{davis_yau_2013}, Proposition \ref{prop_3} holds when Proposition \ref{prop_1}(2) holds and the Assumption 4* in \citet{davis_yau_2013} is satisfied. And Assumption 4* is satisfied because \rev{an} independent process, like the current setting, must be mixing.

It is necessary to discuss the identifiability of models in $\bm{\mathcal{M}}$. First we define $R^{b}$ an oversegmentation compared with $R^{s}$ if $R^{b}(i) = R^{b}(j)$ leads to $R^{s}(i) = R^{s}(j)$. 

\begin{proposition}
\label{prop_4}
For the $k^{\rm th}$ interval, the true model $\psi_{k}^{o} \in \bm{\mathcal{M}}$ satisfies $\psi_{k}^{o} = \arg \max_{\psi \in \bm{\mathcal{M}}} E ( l_k (\psi ; \bm{X}_{t}^{(k)})) $. Also, $\psi_{k}^{o}$ is uniquely identifiable, which means that if there exists a $\mu^{*}$ such that $l_k ((R^{o}, \mu^{o}) ; \bm{X}_{t}^{(k)}) = l_k ((R^{o}, \mu^{*}) ; \bm{X}_{t}^{(k)})$ almost everywhere for $\bm{X}_{t}^{(k)}$, then $\mu^{o} = \mu^{*}$. And suppose there exists another model $\psi_{k}^{b} = (R^{b}, \mu^{b})$ such that $l_k (\psi_{k}^{b} ; \bm{X}_{t}^{(k)}) = l_k (\psi_{k}^{o} ; \bm{X}_{t}^{(k)}) $ almost everywhere, then $R^{b}$ must be an oversegmentation compared with $R^{o}$. And $\mu^{b}$ satisfies  $\mu_{R^{b}(i), w}^{b} = \mu_{R^{o}(i), w}^{o}, \forall i,w$.
\end{proposition}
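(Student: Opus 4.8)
The plan is to prove the three assertions separately: (i) the true parameter $\psi_k^o=(R^o,\mu^o)$ maximizes the expected per-image log-likelihood $E(l_k(\psi;\bm{X}_t^{(k)}))$; (ii) with the region map fixed at $R^o$, the Poisson rates are identifiable; and (iii) any competing model whose log-likelihood agrees almost everywhere with that of $\psi_k^o$ must refine $R^o$ while reproducing its rates. Claim (i) is a divergence (entropy) argument, while (ii) and (iii) both hinge on the affine dependence of $l_k$ on the photon counts.

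For the maximization in (i), I would first take the expectation of $l_k((R,\mu);\bm{X}_t^{(k)})$ under the true law. Since $x_{i,t,w}^{(k)}$ has mean $\mu_{R^o(i),w}^{o(k)}$, the expectation reduces to a sum over pixels $i$ and bands $w$ of terms $\mu_{R^o(i),w}^{o}\log\mu_{R(i),w}-\mu_{R(i),w}$. Subtracting the same expression evaluated at $\psi_k^o$, each summand becomes a nonnegative Bregman (generalized Kullback--Leibler) divergence $a\log(a/b)-a+b$ with $a=\mu_{R^o(i),w}^{o}$ and $b=\mu_{R(i),w}$ (both strictly positive by Assumption~\ref{assumption:1}), which follows at once from $\log t\le t-1$ and vanishes iff $a=b$. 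Summation gives $E(l_k(\psi_k^o))-E(l_k(\psi))\ge 0$, with equality exactly when $\mu_{R(i),w}=\mu_{R^o(i),w}^{o}$ for every $i,w$; thus $\psi_k^o$ is a maximizer and every maximizer matches its rates pixelwise.

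For (ii) and (iii) I would exploit that, for fixed $\psi$, the map $\bm{X}_t^{(k)}\mapsto l_k(\psi;\bm{X}_t^{(k)})=\sum_{i,w}x_{i,t,w}^{(k)}\log\mu_{R(i),w}-\sum_{i,w}\mu_{R(i),w}$ is affine in the count vector. The difference of two log-likelihoods is therefore affine in the integer array $(x_{i,t,w}^{(k)})$. Because the Poisson law assigns positive mass to every nonnegative-integer configuration, almost-everywhere equality forces this affine function to vanish at all lattice points; evaluating at the origin and at each coordinate unit vector separately annihilates the constant and every linear coefficient. For (ii) this yields $\log\mu_{h,w}^{*}=\log\mu_{h,w}^{o}$, hence $\mu^{*}=\mu^{o}$. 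For (iii) the identical coefficient-matching gives $\mu_{R^b(i),w}^{b}=\mu_{R^o(i),w}^{o}$ for all $i,w$, which is precisely the stated rate relation.

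It remains to conclude that $R^b$ oversegments $R^o$, and this is the step I expect to be the main obstacle, because the affine argument delivers equality of \emph{rate vectors} but not of region labels. If $R^b(i)=R^b(j)$, the rate relation forces $\mu_{R^o(i),w}^{o}=\mu_{R^o(j),w}^{o}$ for every $w$, so $i$ and $j$ (indeed all pixels of that $b$-region) carry identical true rate vectors. I would then invoke the connectedness of the SRG-type regions: taking a path inside the common $b$-region from $i$ to $j$, consecutive pixels are neighbors with equal true rate vectors, so no step may cross between two distinct \emph{adjacent} true regions without contradicting Assumption~\ref{assumption:2}. Hence the whole path stays in one true region and $R^o(i)=R^o(j)$, so $R^b$ refines $R^o$. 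The delicate point I would flag explicitly is this joint reliance on connectivity and Assumption~\ref{assumption:2} to exclude two \emph{non}-adjacent true regions sharing a rate vector: without connectivity of the candidate regions, the passage from equal rate vectors to equal region membership can fail.
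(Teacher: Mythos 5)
Your proposal is correct, and it reaches the conclusion by a genuinely different route than the paper. The paper argues by contradiction: it supposes a maximizer $\psi^{*}=(R^{*},\mu^{*})$ of $E(l_k(\psi;\bm{X}_t^{(k)}))$ that is neither the true model nor an oversegmentation, asserts the existence of two \emph{neighboring} pixels $i_0,j_0$ with $R^{*}(i_0)=R^{*}(j_0)$ but $R^{o(k)}(i_0)\neq R^{o(k)}(j_0)$, replaces $\mu^{*}$ by the region-averaged true rates $\bar{\mu}(R^{*})$, and then derives a strict inequality $E(l_k(\psi^{*};\bm{X}_t^{(k)}))<E(l_k(\psi_k^{o};\bm{X}_t^{(k)}))$ because the single rate shared by $i_0$ and $j_0$ cannot simultaneously attain the pixelwise maxima of $a\log\lambda-\lambda$ at two true rates separated by $\delta_1$ (Assumption~\ref{assumption:2}). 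Your argument instead decomposes the expected log-likelihood gap directly into pixelwise nonnegative Bregman/KL terms $a\log(a/b)-a+b$, which yields maximality of $\psi_k^o$ \emph{and} the characterization that every maximizer matches the true rates pixelwise in one stroke, with no contradiction needed; the underlying convexity fact ($\log t\le t-1$, equivalently unique maximization of $a\log\lambda-\lambda$ at $\lambda=a$) is the same in both. Two respects in which your route adds something: first, the paper's written proof only treats the expected-likelihood maximization, whereas the proposition also makes almost-everywhere identifiability claims; your affine-in-counts coefficient-matching step (evaluate the difference at the origin and at unit count vectors, using that the Poisson law charges every lattice point) handles those claims explicitly. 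Second, your closing caveat is well placed: the paper's assertion that non-oversegmentation produces a \emph{neighboring} offending pair $i_0,j_0$ is exactly where region connectivity is silently invoked (a disconnected candidate region could merge two non-adjacent true regions sharing a rate vector, and then no neighboring pair exists and the conclusion itself can fail); your path argument makes this dependence on connectivity plus Assumption~\ref{assumption:2} explicit rather than implicit.
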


\begin{proof}
Suppose on the contrary there exist a model $\psi^{*} = (R^{*}, \mu^{*})$ that satisfies $\psi^{*} = \arg \max_{\psi \in \bm{\mathcal{M}}} E ( l_k (\psi ; \bm{X}_{t}^{(k)})) ) $, and $\psi^{*}$ is neither the true model nor an oversegmentation of the true model. Then there exist two pixels $i_0$ and $j_0$, such that they are neighboring pixels, $R^{o(k)}(i_0) \neq R^{o(k)}(j_0) \text{ and } R^{*}(i_0) = R^{*}(j_0)$. Therefore, by Assumption \ref{assumption:2}, we have
\begin{equation}
\label{eqn:prop4_assumption}
    |\mu_{R^{o(k)}(i_0),w}^{o(k)}-\mu_{R^{o(k)}(j_0),w}^{o(k)}| \geq \delta_1 > 0 .
\end{equation}

Define
\begin{equation}
    \bar{\mu}_{h, w}(R) \coloneqq \frac{1}{a_{h}(R)} \sum_{i, R(i)=h} \mu_{R^{o}(i), w}^{o(k)} ,
\end{equation}
where $a_{h}(R)$ denotes the number of pixels in region $h$ given segmentation $R$.
And in a special case,
\begin{equation}
    \bar{\mu}_{h, w}(R^{o(k)}) = \mu_{h,w}^{o(k)} .
\end{equation}

Then for all possible $\mu^{*} \in \Theta_{k}(R^{*})$, we have
\begin{equation}
\label{eqn:prop4_1}
\begin{split}
    E ( l_k ((R^{*}, \mu^{*}) ; \bm{X}_{t}^{(k)}) ) &= E (\sum_{w=1}^{\nw}  \sum_{h=1}^{m^{*}} \sum_{i, s.t. R^{*}(i) = h} [x_{i,t,w}^{(k)} \log(\mu_{h,w}^{*}) - \mu_{h,w}^{*}] )\\
    &= \sum_{w=1}^{\nw} \sum_{h=1}^{m^{*}} \sum_{i, s.t. R^{*}(i) = h} [\mu_{R^{o(k)}(i),w}^{o(k)} \log(\mu_{h,w}^{*}) - \mu_{h,w}^{*}]\\
    &= \sum_{w=1}^{\nw} \sum_{h=1}^{m^{*}} a_{h}(R^{*}) [\bar{\mu}_{h, w}(R) \log(\mu_{h,w}^{*}) - \mu_{h,w}^{*}]\\
    &\leq \sum_{w=1}^{\nw} \sum_{h=1}^{m^{*}} a_{h}(R^{*}) \max_{\mu_{h,w}} [\bar{\mu}_{h, w}(R) \log(\mu_{h,w}) - \mu_{h,w}]\\
    &= \sum_{w=1}^{\nw} \sum_{h=1}^{m^{*}} a_{h}(R^{*}) [\bar{\mu}_{h, w}(R^{*}) \log(\bar{\mu}_{h, w}(R^{*})) - \bar{\mu}_{h, w}(R^{*})]\\
    &= E ( l_k ((R^{*}, \bar{\mu}(R^{*})) ; \bm{X}_{t}^{(k)}) ) .
\end{split}
\end{equation}

Also, we have
\begin{equation}
\label{eqn:prop4_2}
\begin{split}
    E ( l_k ((R^{*}, \bar{\mu}(R^{*})) ; \bm{X}_{t}^{(k)}) ) =& 
    \sum_{w=1}^{\nw} \sum_{h=1}^{m^{*}} a_{h}(R^{*}) \max_{\mu_{h,w}} [\bar{\mu}_{h, w}(R^{*}) \log(\mu_{h, w}) - \mu_{h, w}]\\
    \leq & \sum_{w=1}^{\nw} \sum_{i=1, i \notin \{i_0, j_0\} }^{\nx} \max_{\lambda_{i,w}} [ \mu_{R^{o(k)}(i),w}^{o(k)} \log(\lambda_{i,w}) - \lambda_{i,w}] \\
    & + \sum_{w=1}^{\nw} [ \mu_{R^{o(k)}(i_0),w}^{o(k)} \log(\mu_{R^{*}(i_0),w}^{*}) - \mu_{R^{*}(i_0),w}^{*} + \mu_{R^{o(k)}(j_0),w}^{o(k)} \log(\mu_{R^{*}(j_0),w}^{*}) - \mu_{R^{*}(j_0),w}^{*}] \\
    < & \sum_{w=1}^{\nw} \sum_{i=1, i \notin \{i_0, j_0\} }^{\nx} \max_{\lambda_{i,w}} [ \mu_{R^{o(k)}(i),w}^{o(k)} \log(\lambda_{i,w}) - \lambda_{i,w}] \\
    & + \sum_{w=1}^{\nw} \max_{\lambda_{i_0,w}} [ \mu_{R^{o(k)}(i_0),w}^{o(k)} \log(\lambda_{i_0,w}) - \lambda_{i_0,w}]\\
    & + \sum_{w=1}^{\nw} \max_{\lambda_{j_0,w}} [ \mu_{R^{o(k)}(j_0),w}^{o(k)} \log(\lambda_{j_0,w}) - \lambda_{j_0,w}]\\
    = & \sum_{w=1}^{\nw} \sum_{i=1}^{\nx} [ \mu_{R^{o(k)}(i),w}^{o(k)} \log(\mu_{R^{o(k)}(i),w}^{o(k)}) - \mu_{R^{o(k)}(i),w}^{o(k)}]\\
    = & \sum_{w=1}^{\nw} \sum_{h=1}^{m^{o(k)}} a_{h}^{o(k)}[ \mu_{R^{o(k)}(i),w}^{o(k)} \log(\mu_{R^{o(k)}(i),w}^{o(k)}) - \mu_{R^{o(k)}(i),w}^{o(k)}]\\
    = & E ( l_k ((R^{o(k)}, \mu^{o(k)}) ; \bm{X}_{t}^{(k)}) ) .
\end{split}
\end{equation}
Here the strict inequities must hold because of~(\ref{eqn:prop4_assumption})

Finally combining~(\ref{eqn:prop4_1}) and~(\ref{eqn:prop4_2}), we have
\begin{equation}
    E ( l_k ((R^{*}, \mu^{*}) ; \bm{X}_{t}^{(k)}) ) < E ( l_k ((R^{o(k)}, \mu^{o(k)}) ; \bm{X}_{t}^{(k)}) ) ,
\end{equation}
which is a contradiction. This finishes the proof.
\end{proof}

\begin{lemma}
\label{lemma:1}
For any fixed $R^{(k)}$,
\begin{equation}
\begin{split}
    \sup_{\underline{\nu_{d}}, \overline{\nu_{u}}} \sup_{\mu^{(k)} \in \Theta_{k}(R^{(k)})} | \frac{1}{\nt (\nu_{k} - \nu_{k-1})} L_{T}^{(k)} ( (R^{(k)}, \mu^{(k)}) , \nu_{d}, \nu_{u}; \bm{X}^{(k)}) - (\nu_{u} - \nu_{d}) L_{k}((R^{(k)}, \mu^{(k)})) | \xrightarrow{a.s.} 0 ,\\
    \sup_{\underline{\nu_{d}}, \overline{\nu_{u}}} \sup_{\mu^{(k)} \in \Theta_{k}(R^{(k)})} | \frac{1}{\nt (\nu_{k} - \nu_{k-1})} L_{T}'^{(k)} ( (R^{(k)}, \mu^{(k)}) , \nu_{d}, \nu_{u}; \bm{X}^{(k)}) - (\nu_{u} - \nu_{d}) L_{k}'((R^{(k)}, \mu^{(k)})) | \xrightarrow{a.s.} 0 ,\\
    \sup_{\underline{\nu_{d}}, \overline{\nu_{u}}} \sup_{\mu^{(k)} \in \Theta_{k}(R^{(k)})} | \frac{1}{\nt (\nu_{k} - \nu_{k-1})} L_{T}''^{(k)} ( (R^{(k)}, \mu^{(k)}) , \nu_{d}, \nu_{u}; \bm{X}^{(k)}) - (\nu_{u} - \nu_{d}) L_{k}''((R^{(k)}, \mu^{(k)})) | \xrightarrow{a.s.} 0 .
\end{split}
\end{equation}

\end{lemma}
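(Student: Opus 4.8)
The plan is to deduce this uniform-over-subintervals strong law from the full-interval uniform law established earlier (the Proposition stated immediately after Proposition~\ref{prop_1}), which gives, uniformly in $\mu^{(k)}\in\Theta_k(R^{(k)})$, that $\frac{1}{\nt(\nu_k-\nu_{k-1})}L_T^{(k)}\to L_k$ together with the analogous statements for its first two derivatives, by upgrading the convergence to hold simultaneously over all admissible endpoints $(\nu_d,\nu_u)$. Since the three displayed statements have identical structure, I will carry out the argument for $l_k$ and note that replacing $l_k,L_k$ by $l_k',L_k'$ and $l_k'',L_k''$ (and invoking the corresponding derivative versions of Propositions~\ref{prop_1} and~\ref{prop_3}) yields the other two verbatim.

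First I would pass from partial sums over $\bigl([T_k\nu_d],[T_k\nu_u]\bigr]$ to cumulative sums from the origin. Writing $\Sigma_T(\nu;\mu)=\sum_{t=1}^{[T_k\nu]}l_k((R^{(k)},\mu);\bm{X}_t^{(k)})$, the quantity to control is $\Sigma_T(\nu_u;\mu)-\Sigma_T(\nu_d;\mu)$, and by the triangle inequality it suffices to establish the cumulative statement
\[
\sup_{\nu}\,\sup_{\mu^{(k)}\in\Theta_k(R^{(k)})}\Bigl|\tfrac{1}{\nt(\nu_k-\nu_{k-1})}\Sigma_T(\nu;\mu)-\nu\,L_k((R^{(k)},\mu))\Bigr|\xrightarrow{a.s.}0,
\]
because the target $(\nu_u-\nu_d)L_k$ splits as $\nu_u L_k-\nu_d L_k$. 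Since $\nt(\nu_k-\nu_{k-1})=T_k$ up to the floor corrections, the normalising constant is essentially $1/T_k$, so this is a functional strong law in which a fraction $\nu$ of the block contributes $\nu L_k$ per unit.

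Next I would handle the two sources of uniformity separately. For the endpoint $\nu$, I would fix a finite grid $0=s_0<s_1<\dots<s_J=1$ of mesh $\delta$. At each grid point, applying the same uniform (in $\mu$) strong law to the first $[T_k s_j]$ observations --- which are i.i.d.\ from the $k$-th interval distribution and grow in number like $s_jT_k\to\infty$ --- gives $\frac{1}{\nt(\nu_k-\nu_{k-1})}\Sigma_T(s_j;\mu)\to s_jL_k(\mu)$ uniformly in $\mu$; as there are finitely many $s_j$, these hold simultaneously almost surely. To fill the gaps, for $s_{j-1}\le\nu\le s_j$ I would bound the oscillation $\bigl|\Sigma_T(\nu;\mu)-\Sigma_T(s_{j-1};\mu)\bigr|\le\sum_{t=[T_ks_{j-1}]+1}^{[T_ks_j]}|l_k((R^{(k)},\mu);\bm{X}_t^{(k)})|$. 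The key device is a $\mu$-free integrable envelope: on the compact parameter set furnished by Assumption~\ref{assumption:1}, $|l_k((R^{(k)},\mu);\bm{X}_t^{(k)})|\le G(\bm{X}_t^{(k)}):=\sum_{w,i}\bigl(C_{\log}\,x_{i,t,w}^{(k)}+C_u\bigr)$ with $C_{\log}=\max(|\log C_d|,|\log C_u|)$, and $E[G]<\infty$ by the $v=1$ moment bound of Proposition~\ref{prop_1}. The ordinary strong law for the block sums of $G$ then bounds the oscillation by $(s_j-s_{j-1})E[G]+o(1)\le C\delta+o(1)$, uniformly in $\mu$. Combining the grid convergence, the oscillation bound, and the Lipschitz continuity of $\nu\mapsto\nu L_k(\mu)$ (with constant $\sup_\mu|L_k(\mu)|<\infty$, again from Proposition~\ref{prop_1}), the limsup of the left-hand side is at most $C\delta$ almost surely; letting $\delta\to0$ closes the interior case.

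Finally I would treat the extension that lets $\nu_d,\nu_u$ stray into $[-h_T,1+r_T]$. Here the offending blocks consist of at most $(h_T+r_T)T_k+O(1)$ observations that actually originate from the neighbouring true intervals, so their per-observation log-likelihood has a different mean than $L_k$. Because these blocks are a vanishing fraction of the data and their averaged contribution is controlled by Proposition~\ref{prop_3} together with the envelope $G$, their total contribution after dividing by $T_k$ is $O(h_T+r_T)\to0$, and the discrepancy between their own limiting mean and $L_k$ is likewise $O(h_T+r_T)$ since both means are bounded by Proposition~\ref{prop_1}. I expect the main obstacle to be exactly this simultaneous uniformity in $\nu$ and $\mu$ on the extended domain: the parameter uniformity is inherited from the preceding Proposition, but making the oscillation control hold uniformly in $\mu$ at the same time as the grid refinement in $\nu$ is what forces the $\mu$-free envelope argument, and verifying that the foreign boundary blocks contribute negligibly is what genuinely requires Proposition~\ref{prop_3} rather than a bare strong law.
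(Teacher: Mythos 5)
Your argument is essentially correct, but it is not the paper's route: the paper proves this lemma by citation alone, deferring to Propositions 1 and 2 of \citet{davis_yau_2013}, which establish this type of two-index uniform strong law for general piecewise stationary (mixing) processes, with the endpoint uniformity handled by the machinery developed there. What you do instead is rebuild the result from scratch in the i.i.d.\ Poisson setting: reduction of the two-endpoint statement to cumulative sums from the origin, a finite grid in $\nu$ with the parameter-uniform strong law applied at each grid point, a $\mu$-free integrable envelope $G$ obtained from the compactness of $\Theta_k(R^{(k)})$ (Assumption~\ref{assumption:1}) together with the moment bounds of Proposition~\ref{prop_1}, oscillation control between grid points by the ordinary SLLN for $G$, and a separate negligibility argument (via Proposition~\ref{prop_3} and the envelope) for the at most $(h_T+r_T)T_k+O(1)$ observations that stray into the neighbouring true intervals. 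This more elementary derivation buys transparency---it makes visible that only compactness, first-moment envelopes, and independence are needed, and the same envelope trick handles $l_k'$ and $l_k''$ verbatim---whereas the paper's citation buys brevity and inherits from \citet{davis_yau_2013} a framework that also covers dependent (mixing) data, which would matter if the independence of the image slices were ever relaxed. Two points to tighten if you write this up in full: state explicitly that $\Theta_k(R^{(k)})$ is compact and bounded away from zero, since your constants $C_{\log}$ and $\sup_\mu|L_k(\mu)|$ depend on that; and in the boundary step, treat separately the regime where the number of foreign observations stays bounded (where the contribution is trivially $o(1)$ after dividing by $T_k$) and the regime where it diverges (where the SLLN or Proposition~\ref{prop_3} applies), so the $O(h_T+r_T)$ claim is justified in both cases.
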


See Proposition 1 and 2 in \citet{davis_yau_2013} for the proof.

\begin{lemma}
Suppose the true parameters for interval $k$ is $\psi^{o(k)} = (R^{o(k)}, \mu^{o(k)})$. And suppose a region segmentation $R^{(k)}$ is specified for estimation. Let

\begin{equation}
\begin{split}
    \hat{\mu}_{T} &= \hat{\mu}_{T}^{(k)}(\nu_{d}, \nu_{u}) \coloneqq \arg \max_{\mu^{(k)} \in \Theta_{k}(R^{(k)})} L_{T}^{(k)} ( (R^{(k)}, \mu^{(k)}) , \nu_{d}, \nu_{u}; \bm{X}_{k}) ,\\
    \mu^{*(k)} &\coloneqq \arg \max_{\mu^{(k)} \in \Theta_{k}(R^{(k)})} L_{k}((R^{(k)}, \mu^{(k)})) .
\end{split}
\end{equation}

Then

\begin{equation}
\label{eqn:prop_6.1}
    \sup_{\underline{\nu_{d}}, \overline{\nu_{u}}} | \frac{1}{\nt (\nu_{k} - \nu_{k-1})} L_{T}^{(k)} ( (R^{(k)}, \hat{\mu}_{T}) , \nu_{d}, \nu_{u}; \bm{X}^{(k)}) - (\nu_{u} - \nu_{d}) L_{k}((R^{(k)}, \mu^{*(k)})) | \xrightarrow{a.s.} 0 ,
\end{equation}
where the supremum is defined in (\ref{eqn:supremum}). And if $R^{(k)} = R^{o(k)}$, we further have

\begin{equation}
\label{eqn:prop_6.2}
    \sup_{\underline{\nu_{d}}, \overline{\nu_{u}}} | \hat{\mu}_{T}^{(k)}(\nu_{d}, \nu_{u}) - \mu^{o(k)} | \xrightarrow{a.s.} 0 .
\end{equation}

If $R^{(k)}$ is an oversegmentation than $R^{o(k)}$, then we have

\begin{equation}
\label{eqn:prop_6.3}
    \sup_{\underline{\nu_{d}}, \overline{\nu_{u}}} | \hat{\mu}_{T, R^{(k)}(i), w}(\nu_{d}, \nu_{u}) - \mu^{o(k)}_{R^{o(k)}(i), w} | \xrightarrow{a.s.} 0 \ \forall i,w .
\end{equation}

\end{lemma}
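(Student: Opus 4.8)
The plan is to treat these three displays as a standard M-estimation consistency result, now made uniform over the window endpoints $(\nu_d, \nu_u)$. The engine is Lemma~\ref{lemma:1}, which already supplies the uniform (over $\mu^{(k)} \in \Theta_k(R^{(k)})$ and over the extended window $\sup_{\underline{\nu_d}, \overline{\nu_u}}$) convergence of the normalized windowed log-likelihood and its first two derivatives to $(\nu_u - \nu_d)$ times their population counterparts. Throughout I would first fix a single sample path in the probability-one event on which all three convergences of Lemma~\ref{lemma:1} hold, and then argue deterministically; this lets the later compactness/subsequence steps go through without further measure-theoretic bookkeeping.

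For \eqref{eqn:prop_6.1} I would exploit that $\hat{\mu}_T$ is by definition the maximizer of $L_T^{(k)}((R^{(k)}, \cdot), \nu_d, \nu_u; \bm{X}^{(k)})$ over $\Theta_k(R^{(k)})$, while $\mu^{*(k)}$ maximizes $L_k((R^{(k)}, \cdot))$. Hence the left-hand side of \eqref{eqn:prop_6.1} is a difference of two suprema over $\mu^{(k)}$, and the elementary bound $|\sup_\mu f - \sup_\mu g| \le \sup_\mu |f - g|$ shows it is dominated by
\[
\sup_{\mu^{(k)} \in \Theta_k(R^{(k)})} \Big| \frac{1}{\nt(\nu_k - \nu_{k-1})} L_T^{(k)}((R^{(k)}, \mu^{(k)}), \nu_d, \nu_u; \bm{X}^{(k)}) - (\nu_u - \nu_d) L_k((R^{(k)}, \mu^{(k)})) \Big|.
\]
Taking $\sup_{\underline{\nu_d}, \overline{\nu_u}}$ and invoking the first display of Lemma~\ref{lemma:1} gives \eqref{eqn:prop_6.1}.

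For \eqref{eqn:prop_6.2} I would upgrade this value convergence to argmax convergence. Evaluating the first display of Lemma~\ref{lemma:1} at the (random) point $\mu^{(k)} = \hat{\mu}_T$ is legitimate because that display is uniform over all of $\Theta_k(R^{(k)})$; subtracting it from \eqref{eqn:prop_6.1} cancels the windowed likelihood and leaves $(\nu_u - \nu_d)\,[\,L_k((R^{o(k)}, \hat{\mu}_T)) - L_k((R^{o(k)}, \mu^{o(k)}))\,] \to 0$ uniformly over the window, where I used $\mu^{*(k)} = \mu^{o(k)}$ from Proposition~\ref{prop_4} in the case $R^{(k)} = R^{o(k)}$. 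Dividing by $\nu_u - \nu_d \ge \epsilon_{\nu} > 0$ yields uniform convergence of the \emph{values} $L_k((R^{o(k)}, \hat{\mu}_T))$ to the maximum $L_k((R^{o(k)}, \mu^{o(k)}))$. To convert this into uniform convergence of the \emph{argmax}, I would argue by contradiction: if \eqref{eqn:prop_6.2} failed there would be windows $(\nu_d^{(j)}, \nu_u^{(j)})$ and an index sequence along which $|\hat{\mu}_T - \mu^{o(k)}| \ge \delta$; since $\Theta_k(R^{(k)})$ is compact (Assumption~\ref{assumption:1}) and the admissible window endpoints eventually lie in a compact set (as $h_T, r_T \to 0$), I can pass to a subsequence on which $\hat{\mu}_T \to \tilde{\mu} \ne \mu^{o(k)}$, and continuity of $L_k$ together with the value limit forces $L_k((R^{o(k)}, \tilde{\mu})) = L_k((R^{o(k)}, \mu^{o(k)}))$. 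This contradicts uniqueness of the maximizer: the population criterion $L_k((R^{o(k)}, \mu)) = \sum_{w,h} a_h[\mu^{o(k)}_{h,w}\log\mu_{h,w} - \mu_{h,w}]$ is strictly concave in each $\mu_{h,w}$ because $\mu^{o(k)}_{h,w} \ge C_d > 0$ (Assumption~\ref{assumption:1}), which is exactly the identifiability content of Proposition~\ref{prop_4}. Hence \eqref{eqn:prop_6.2}.

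The proof of \eqref{eqn:prop_6.3} is identical, except that when $R^{(k)}$ is an oversegmentation of $R^{o(k)}$ the population maximizer is no longer $\mu^{o(k)}$ but the model $\mu^{*(k)}$ of Proposition~\ref{prop_4}, whose entries satisfy $\mu^{*(k)}_{R^{(k)}(i),w} = \mu^{o(k)}_{R^{o(k)}(i),w}$ for all $i, w$; the same contradiction argument gives $\hat{\mu}_T \to \mu^{*(k)}$ uniformly, and reading this off componentwise delivers \eqref{eqn:prop_6.3}. I expect the main obstacle to be bookkeeping the uniformity in $(\nu_d, \nu_u)$ rather than any single estimate: the window is allowed to overflow $[0,1]$ slightly, so $\hat{\mu}_T$ is computed from data that may include $o(\nt)$ observations drawn from the neighbouring true intervals $(k-1)$ and $(k+1)$. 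Controlling that contamination uniformly is precisely what the extended-supremum convergence of Lemma~\ref{lemma:1} (built on Proposition~\ref{prop_3}) is designed to absorb, so once Lemma~\ref{lemma:1} is granted the remaining work is the routine, if careful, argmax argument sketched above.
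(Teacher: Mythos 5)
Your proposal is correct and takes essentially the same route as the paper: Lemma~\ref{lemma:1} is the engine, the maximizer definitions yield the value convergence (the paper writes this as a two-sided sandwich using the MLE inequality, where you use the equivalent bound $|\sup_\mu f - \sup_\mu g| \le \sup_\mu|f-g|$), and Proposition~\ref{prop_4}'s unique-maximizer statement converts value convergence into the parameter convergences (\ref{eqn:prop_6.2})--(\ref{eqn:prop_6.3}). The compactness/subsequence argmax argument you spell out is precisely the standard step the paper leaves implicit in its one-line appeal to Proposition~\ref{prop_4}.
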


\begin{proof}
\begin{equation}
\label{eqn:prop6.4}
\begin{split}
    &(\nu_{u} - \nu_{d}) (L_{k}((R^{(k)}, \mu^{*(k)})) - L_{k}((R^{(k)}, \hat{\mu}_{T}))) \\ 
    \leq& \sup_{\underline{\nu_{d}}, \overline{\nu_{u}}} | (\nu_{u} - \nu_{d}) L_{k}((R^{(k)}, \mu^{*(k)})) - \frac{1}{\nt (\nu_{k} - \nu_{k-1})} L_{T}^{(k)} ( (R^{(k)}, \mu^{*(k)}) , \nu_{d}, \nu_{u}; \bm{X}^{(k)}) \\
    &+ \frac{1}{\nt (\nu_{k} - \nu_{k-1})} L_{T}^{(k)} ( (R^{(k)}, \hat{\mu}_{T}) , \nu_{d}, \nu_{u}; \bm{X}^{(k)}) - (\nu_{u} - \nu_{d}) L_{k}((R^{(k)}, \hat{\mu}_{T})) | \\
    \leq& 2 \sup_{\underline{\nu_{d}}, \overline{\nu_{u}}} \sup_{\mu^{(k)} \in \Theta_{k}(R^{(k)})} | \frac{1}{\nt (\nu_{k} - \nu_{k-1})} L_{T}^{(k)} ( (R^{(k)}, \hat{\mu}_{T}) , \nu_{d}, \nu_{u}; \bm{X}^{(k)}) - (\nu_{u} - \nu_{d}) L_{k}((R^{(k)}, \mu^{(k)})) | \\ \xrightarrow{a.s.}& 0 .
\end{split}
\end{equation}

The first inequity is obtained by the definition of maximum likelihood estimator, and the last convergence comes from Lemma~{\ref{lemma:1}}. As $\mu^{*(k)}$ maximizes $L_{k}((R^{(k)}, \mu^{(k)}))$ and $\nu_{u} - \nu_{d} > 0$, we have

\begin{equation}
\label{eqn:prop6.5}
    | L_{k}((R^{(k)}, \mu^{*(k)})) - L_{k}((R^{(k)}, \hat{\mu}_{T})) | \xrightarrow{a.s.} 0 .
\end{equation}

Combining~(\ref{eqn:prop6.4}), (\ref{eqn:prop6.5}) and Proposition \ref{prop_1}(1), \ref{eqn:prop_6.1} holds. If $R^{(k)} = R^{o(k)}$, by Proposition \ref{prop_4}, $L_{k}((R^{(k)}, \mu^{(k)}))$ has a unique maximizer at $\mu^{o(k)}$, so~(\ref{eqn:prop_6.2}) holds. If $R^{(k)}$ is an oversegmentation \rev{compared with} $R^{o(k)}$, by Proposition \ref{prop_4}, (\ref{eqn:prop_6.3}) holds.
\end{proof}

Now we give a preliminary result of the convergence when the number of change points is known.

\begin{theorem} (Theorem 1 in \citet{davis_yau_2013})
\label{theorem_1}
Let $\{ \bm{Y}_{t} | t=1,..., \nt \}$ be the observed images specified by $(K^{o}, \bm{\nu}^{o}, \bm{\psi}^{o})$. And suppose the number of change points $K^{o}$ is known. The change points and parameters are estimated by

\begin{equation}
    (\hat{\bm{\nu}}_{T}, \hat{\bm{\psi}}_{T}) = \text{arg} \min_{\bm{\lambda} \in A_{\epsilon_{\lambda}}^{m}, \bm{\psi} \in \bm{\mathcal{M}}} \frac{1}{\nt} \text{MDL}(K^{o}, \bm{\nu}, \bm{\psi}) .
\end{equation}

Then $\hat{\bm{\nu}}_{T} \xrightarrow{a.s.} \bm{\nu}^{o}$ and for each interval, the estimated $\hat{R}^{(k)}$ must be an oversegmentation comparing to the true region segmentation.

\end{theorem}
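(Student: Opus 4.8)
The plan is to realize Theorem~\ref{theorem_1} as an instance of Theorem~1 in \citet{davis_yau_2013}, so that the task reduces to verifying that their structural hypotheses hold in our Poisson image model---which is precisely what Propositions~\ref{prop_1}--\ref{prop_4} and Lemma~\ref{lemma:1} supply. First I would dispose of the penalty terms. Because $K^{o}$ is fixed and $\nx,\nw$ are constants, every term in the MDL criterion~(\ref{eqn:mdl}) other than the log-likelihood is at most $O(\log \nt)$: the change-point penalty is $K^{o}\log\nt$, each region-count and perimeter penalty is $O(1)$, and each area penalty $\frac{\nw}{2}\sum_h \log((\cdots)a_h^{(k)})$ grows like $\log\nt$ since the interval length is of order $\nt$. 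After dividing by $\nt$ as in~(\ref{eqn:MDL_based}), all of these vanish, so minimizing $\frac{1}{\nt}\text{MDL}$ is asymptotically equivalent to maximizing the total normalized profile log-likelihood $\frac{1}{\nt}\sum_{k}\max_{\mu}L_T^{(k)}$.

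Second, I would identify the deterministic limit of this normalized profile log-likelihood as a function of the candidate locations $\bm{\nu}$. For a fitted interval whose endpoints align with true change points, Lemma~\ref{lemma:1} together with the subsequent lemma identifying the limit of the maximum-likelihood estimator $\hat{\mu}_T$ give that the normalized maximized log-likelihood converges almost surely to $(\nu_u-\nu_d)\,L_k((R^{(k)},\mu^{*(k)}))$. The delicate case is a fitted interval that straddles a true change point: such an interval pools observations from two or more true regimes, and the contribution of the misassigned boundary observations must be controlled. This is exactly the role of the extended supremum~(\ref{eqn:supremum}) and of Proposition~\ref{prop_3}, which ensure the boundary blocks converge to their expectations at the correct rate; splitting each fitted interval at the true change points it contains then yields a well-defined deterministic objective $\mathcal{L}(\bm{\nu})$.

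Third, I would show $\mathcal{L}(\bm{\nu})$ is uniquely maximized at $\bm{\nu}^{o}$. If $\bm{\nu}\neq\bm{\nu}^{o}$ under the separation constraint $A_{\epsilon_\nu}^{K}$, then at least one fitted interval must straddle a true change point, so on that interval a single piecewise-constant Poisson model is fit to data from two distinct regimes. By Assumption~\ref{assumption:3} those regimes differ in some pixel and band, and the strict-concavity and identifiability argument underlying Proposition~\ref{prop_4} shows that pooling incurs a strictly smaller expected log-likelihood than fitting the two true pieces separately; summing over intervals gives $\mathcal{L}(\bm{\nu})<\mathcal{L}(\bm{\nu}^{o})$. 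Combining the uniform almost-sure convergence of the second step with this well-separated unique maximum then yields $\hat{\bm{\nu}}_T\xrightarrow{a.s.}\bm{\nu}^{o}$ by the standard Wald-type argmax argument.

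Finally, with the fitted intervals asymptotically coinciding with the true ones, within each interval the selected segmentation maximizes the limiting per-observation log-likelihood $L_k((R^{(k)},\mu^{*(k)}))$, since the region penalties are negligible after normalization by $\nt$. Proposition~\ref{prop_4} states that any such maximizer must be an oversegmentation of $R^{o(k)}$---under-segmentation would merge neighboring regions that differ by at least $\delta_1$ (Assumption~\ref{assumption:2}) and strictly lowers the expected log-likelihood---so $\hat{R}^{(k)}$ is an oversegmentation almost surely, as claimed. I expect the main obstacle to be the third step: rigorously establishing the strict limiting gap for straddling intervals, i.e., controlling the misassigned boundary observations through Proposition~\ref{prop_3} while simultaneously ruling out that a cleverly chosen oversegmentation on a straddling interval could recover the lost likelihood. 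Handling this requires the uniform-in-$(\nu_d,\nu_u)$ control of Lemma~\ref{lemma:1} over the extended range, which is the technical heart inherited from \citet{davis_yau_2013}.
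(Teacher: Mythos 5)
Your proposal is correct and takes essentially the same route as the paper: the paper itself omits the details, stating only that the proof mirrors Theorem~1 of \citet{davis_yau_2013} with Assumption~\ref{assumption:3} supplying the needed separation between regimes, and your sketch is precisely that adaptation, invoking the paper's Propositions~\ref{prop_1}--\ref{prop_4} and Lemma~\ref{lemma:1} as the verified structural hypotheses. You also correctly isolate the two points the paper flags as the non-generic ingredients, namely Assumption~\ref{assumption:3} for intervals straddling a true change point and Proposition~\ref{prop_4} for the oversegmentation conclusion.
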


We skip the proof of this theorem because it is quite similar to the proof of Theorem 1 in \citet{davis_yau_2013}. Notice that we need to use Assumption \ref{assumption:3} in the proof.

\begin{corollary} (Corollary 1 in \citet{davis_yau_2013})
Under the conditions of Theorem \ref{theorem_1}, if the number of change-
points is unknown and is estimated from the data , then
\begin{enumerate}
    \item The number of change points cannot be underestimated. That is to say, $\hat{K} \geq K^{o}$ almost surely when $\nt$ is large enough.
    \item When $\hat{K} > K^{o}$, $\bm{\nu}^{o}$ must be a subset of the limit of $\hat{\bm{\nu}}_{T}$ for large enough $\nt$.
    \item In each fitted interval, the region segmentation must be equal to or be an oversegmentation comparing with the corresponding true region segmentation. 
\end{enumerate}
\end{corollary}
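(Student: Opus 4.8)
The plan is to reduce the analysis of the normalized criterion $\tfrac{1}{\nt}\mathrm{MDL}(K,\bm{\nu},\bm{\psi})$ to a comparison of average log-likelihoods, and then to exploit that a fitted interval straddling a true change point incurs a strict, linear-order likelihood deficit that no sublinear code-length saving can offset. First I would observe that in~(\ref{eqn:mdl}) every penalty term---the $K\log(\nt)$ term, the $m^{(k)}\log(\nx)$ terms, the perimeter terms, and the $\tfrac{\nw}{2}\log(\cdots)$ terms---is at most $O(\log\nt)$ (recall $\nx$, $\nw$, and the perimeters are fixed, and only the temporal length inside the logarithm grows), so after dividing by $\nt$ they all vanish as $\nt\to\infty$. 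Consequently, up to $o(1)$, minimizing $\tfrac{1}{\nt}\mathrm{MDL}$ is equivalent to maximizing the normalized maximized log-likelihood $\tfrac{1}{\nt}\sum_k L_T^{(k)}$, and Lemma~\ref{lemma:1} (with the boundary-leakage convention of~(\ref{eqn:supremum})) furnishes the uniform almost-sure convergence of these normalized likelihoods to their population counterparts $(\nu_u-\nu_d)L_k(\cdot)$, uniformly in the interval endpoints and in the region intensities.

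For part~1 (no underestimation), suppose on the contrary that $\hat{K}<K^{o}$ infinitely often on a set of positive probability. By the pigeonhole principle, and using the $\epsilon_{\nu}$ minimum-spacing of the true change points, at least one true change point must lie well inside a fitted interval, with the portions of that interval on either side each of length bounded below by a fixed fraction of $\epsilon_{\nu}$. On such a straddling interval the best constant-per-region fit must reconcile two genuinely different intensity regimes (Assumption~\ref{assumption:3}); by the strict-concavity and identifiability argument underlying Proposition~\ref{prop_4}, the population log-likelihood maximized over a single model on the mixed interval is strictly smaller than the sum of the two true-interval population log-likelihoods, the gap being bounded below by a constant $c>0$ determined by $\delta_{2}$ and $\epsilon_{\nu}$. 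Passing to the almost-sure limit via Lemma~\ref{lemma:1}, the total average log-likelihood of any underestimating configuration is asymptotically below that of the truth by at least $c$; since the normalized penalty differences vanish, $\tfrac{1}{\nt}\mathrm{MDL}$ of the underestimating model exceeds that of the true model for all large $\nt$, contradicting minimality. Hence $\hat{K}\ge K^{o}$ almost surely for large $\nt$.

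Part~2 (overestimation still localizes the truth) uses the same deficit, applied locally. Given $\hat{K}\ge K^{o}$, I would argue by contradiction that if some true change point $\nu^{o}_{j}$ stayed bounded away from every fitted change point along a subsequence, then the fitted interval containing $\nu^{o}_{j}$ would straddle it and again incur a linear-order likelihood loss, which the at-most-logarithmic code-length cannot repay; this forces each $\nu^{o}_{j}$ to be a limit point of $\hat{\bm{\nu}}_{T}$, i.e.\ $\bm{\nu}^{o}$ lies in the limit of $\hat{\bm{\nu}}_{T}$. For part~3, once the change points are asymptotically correctly localized, each fitted interval aligns---up to vanishing boundary effects controlled by Proposition~\ref{prop_3} and~(\ref{eqn:supremum})---with a single true interval, and the argument of Theorem~\ref{theorem_1} applies on that interval: Proposition~\ref{prop_4} shows that any segmentation matching the true population log-likelihood is an oversegmentation of (or equal to) the truth, while the fixed-size penalty $m^{(k)}\log(\nx)$ is $O(1)$ and, after the $1/\nt$ normalization, too weak to force undersegmentation, which would itself cost a strict linear deficit. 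Thus each fitted segmentation equals or oversegments the corresponding true segmentation.

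The hard part will be establishing the strict linear deficit \emph{uniformly}: one must show the population log-likelihood gap on a straddling interval is bounded below by a positive constant simultaneously over all admissible positions of the straddling change point and over all competing region models, and then transfer this to an almost-sure statement uniformly through Lemma~\ref{lemma:1} and the extended-supremum convention~(\ref{eqn:supremum}). The combinatorial bookkeeping in part~2---ruling out every arrangement of the surplus change points that might otherwise conceal a missed true change point---is the other delicate point, though it ultimately reduces to the same per-interval deficit estimate.
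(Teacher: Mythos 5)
Your proposal is correct and takes essentially the same route as the paper: the paper proves this corollary only by citation to Corollary~1 of \citet{davis_yau_2013}, and your sketch---penalties in~(\ref{eqn:mdl}) are $O(\log \nt)$ and hence vanish under the $1/\nt$ normalization, while any fitted interval straddling a true change point (or any undersegmented region) incurs a linear-order population likelihood deficit via Assumptions~\ref{assumption:2}--\ref{assumption:3}, the strict-concavity argument of Proposition~\ref{prop_4}, and the uniform convergence in Lemma~\ref{lemma:1}---is precisely the argument underlying that cited result. The ``hard parts'' you flag (uniformity of the deficit over endpoints and models, and the bookkeeping for surplus change points) are exactly the ingredients supplied by the Davis--Yau machinery the paper imports.
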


See Corollary 1 in \citet{davis_yau_2013} for more details.

\begin{theorem} (Theorem 2 in \citet{davis_yau_2013})
Let $\bm{\nu}^{o} = (\nu_{1}^{o}, \nu_{2}^{o},...,\nu_{m^{o}}^{o})$ be the true change points. And $(\hat{K}, \hat{\bm{\nu}}_{T}, \hat{\bm{\psi}}_{T})$ is the MDL-based result. Then $\forall k=1,2,...,K^{o}$, there exists a $\hat{\nu}_{t_{k}} \in \hat{\bm{\nu}}_{T}$ where $1 \leq t_{k} \leq \hat{K}$ such that

\begin{equation}
    | \hat{\nu}_{t_{k}} - \nu_{k}^{o} | = o(\nt ^{-\frac{1}{2}}) \ a.s. \ .
\end{equation}
\end{theorem}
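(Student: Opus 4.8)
The plan is to argue by contradiction, localizing the problem to a single true change point and then showing that any estimate lying farther than $o(\nt^{-1/2})$ from the truth could be relocated to strictly decrease the criterion in~(\ref{eqn:MDL_based}), contradicting its minimality. By the Corollary following Theorem~\ref{theorem_1} I already have $\hat K \ge K^{o}$ almost surely and the fact that each true $\nu_k^{o}$ is the limit of some coordinate of $\hat{\bm\nu}_T$; so I would fix $k$, let $\hat\nu_{t_k}$ be the estimate closest to $\nu_k^{o}$, and set $\epsilon_T=|\hat\nu_{t_k}-\nu_k^{o}|$ with $n_T=\lfloor \nt\,\epsilon_T\rfloor$ the number of time bins lying strictly between the two locations. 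Suppose $\nt^{1/2}\epsilon_T\not\to 0$; then on an event of positive probability there is a rational $\epsilon>0$ with $n_T\ge \epsilon\,\nt^{1/2}$ for infinitely many $\nt$, and it suffices to derive a contradiction on this event.

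Next I would build the competing model from $(\hat K,\hat{\bm\nu}_T,\hat{\bm\psi}_T)$ by moving only the single change point $\hat\nu_{t_k}$ to $\nu_k^{o}$, leaving $\hat K$, every other change point, and all spatial segmentations untouched, and re-maximizing the Poisson rates in the two affected intervals; this configuration is admissible for large $\nt$ because the true spacings exceed $\epsilon_\nu$ and $\hat\nu_{t_k}\to\nu_k^{o}$. The resulting MDL difference splits into a penalty part and a log-likelihood part. Since $\hat K$ and the segmentations are unchanged, the terms $K\log\nt$, $m^{(k)}\log\nx$ and $\tfrac{\log 3}{2}\sum_h b_h^{(k)}$ are identical for the two models, and only the two interval lengths in $\tfrac{\nw}{2}\sum_h\log((\cdot)\,a_h^{(k)})$ are perturbed, each by a factor $1\pm n_T/(\text{length})$ with length of order $\nt$; summed over the bounded number of regions this contributes only $O(\epsilon_T)=o(1)$, hence is negligible relative to $n_T$ under the contradiction hypothesis.

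For the log-likelihood part I would produce a strictly positive per-bin gain. By the Lemma preceding Theorem~\ref{theorem_1}, the rates used under the estimated model on those $n_T$ bins converge uniformly to the true rates of the neighbouring interval, whereas the data there are generated by the true rates of the current interval; by Assumption~\ref{assumption:3} these two rate vectors differ by at least $\delta_2$ in some pixel and band, and the compactness in Assumption~\ref{assumption:1} together with the strict-optimality computation in Proposition~\ref{prop_4} converts this separation into a uniform positive expected Kullback--Leibler gap $c_0>0$ per misassigned bin. Thus the expected reduction in the negative log-likelihood achieved by the relocation is at least $c_0\,n_T$, which is of order at least $\nt^{1/2}$ along the bad subsequence.

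The hard part will be controlling the stochastic fluctuation uniformly. I must show that the centered log-likelihood sum over the window departs from its mean by only $o(n_T)$, uniformly over all admissible window lengths $n_T\in[\epsilon\,\nt^{1/2},\nt]$, over window positions, and over the data-dependent re-estimated rates. A centered sum of $n_T$ independent terms with the moment bounds of Proposition~\ref{prop_1} (the case $v=4$) fluctuates at scale $\sqrt{n_T}=O(\nt^{1/4})$, which is of strictly smaller order than the deterministic gain $c_0 n_T$ of order $\nt^{1/2}$; the threshold $\nt^{-1/2}$ in the statement is exactly the balance point between this linear-in-$n_T$ gain and the $\sqrt{n_T}$ fluctuation. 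To make the bound hold simultaneously for all windows and all large $\nt$ I would combine Proposition~\ref{prop_1} with a Haj\'ek--R\'enyi / maximal inequality and Borel--Cantelli, using the uniform-in-window convergence of $\hat\mu_T$ from the Lemma preceding Theorem~\ref{theorem_1} and the uniform law of large numbers of Lemma~\ref{lemma:1} to pass from fixed to estimated rates, precisely as in the proof of Theorem~2 of \citet{davis_yau_2013}. Once the fluctuation is shown to be $o(n_T)$, the relocation strictly lowers the MDL for all large $\nt$ in the subsequence, contradicting~(\ref{eqn:MDL_based}) and forcing the bad event to have probability zero.
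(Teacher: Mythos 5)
Your proposal is correct and takes essentially the same route as the paper, which establishes this theorem simply by deferring to the proof of Theorem 2 in \citet{davis_yau_2013}: a contradiction argument that relocates the offending estimated change point to the true location, extracts a Kullback--Leibler gain linear in the number of misassigned bins from Assumptions~\ref{assumption:1}--\ref{assumption:3}, shows the penalty-term changes are negligible, and controls the stochastic fluctuations uniformly over windows via maximal inequalities and the uniform convergence lemmas. Since the paper gives no details beyond that citation, your sketch is, if anything, more explicit than the paper's own treatment.
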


See the proof of Theorem 2 in \citet{davis_yau_2013}.

\begin{lemma}
Suppose the true region segmentation $R^{o(k)}$ is specified for the $k^{\rm th}$ interval, then

\begin{equation}
    \hat{\mu}_{T}^{(k)}(\hat{\nu}_{k-1}, \hat{\nu}_{k}) - \mu^{o(k)} = O(\sqrt{\frac{\log \log(\nt)}{\nt}}) \ a.s. \ .
\end{equation}

When the specific region segmentation $R^{(k)}$ is an oversegmentation \rev{compared with} $R^{o(k)}$, then we have

\begin{equation}
    \hat{\mu}_{T, R^{(k)}(i), w}^{(k)}(\hat{\nu}_{k-1}, \hat{\nu}_{k}) - \mu_{R^{o(k)}(i), w}^{o(k)} = O(\sqrt{\frac{\log \log(\nt)}{\nt}}) \ \ a.s.\ \ \forall i, w .
\end{equation}
\end{lemma}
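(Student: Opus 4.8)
The plan is to write the constrained Poisson MLE as an ordinary sample mean of the counts falling in the fitted interval, and then to separate this mean into a contribution from the \emph{true} interval, where a law of the iterated logarithm applies directly, and a \emph{boundary} contribution due to the small mismatch between the estimated and the true change points. The target rate $O(\sqrt{\log\log(\nt)/\nt})$ is precisely the Hartman--Wintner LIL rate for an average of $\Theta(\nt)$ i.i.d.\ summands, so the LIL term will be the dominant one and every correction must be shown to be of strictly smaller order $o(\nt^{-1/2})$.

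First I would record that, for the Poisson log-likelihood, the region estimator is explicit,
\[
\hat\mu_{T,h,w}^{(k)}(\hat\nu_{k-1},\hat\nu_k)=\frac{1}{a_h\,\hat n}\sum_{t=[\hat\nu_{k-1}\nt]+1}^{[\hat\nu_k\nt]}\ \sum_{i:\,R^{(k)}(i)=h}x_{i,t,w}^{(k)},\qquad \hat n=[\hat\nu_k\nt]-[\hat\nu_{k-1}\nt].
\]
When $R^{(k)}=R^{o(k)}$, or when $R^{(k)}$ is an oversegmentation (so that every pixel of a fitted region lies in a single true region, by the definition of oversegmentation), the summands within a fixed $(h,w)$ are i.i.d.\ $\text{Poisson}\big(\mu^{o(k)}_{R^{o(k)}(i),w}\big)$, with finite variance by Assumption~\ref{assumption:1}. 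This collapses both displayed claims, componentwise and then uniformly over the finitely many $(h,w)$, into one statement about the fluctuation of a single sample mean, so I would treat them together.

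Next I would invoke the localisation rate $|\hat\nu_{k-1}-\nu^o_{k-1}|=o(\nt^{-1/2})$ and $|\hat\nu_k-\nu^o_k|=o(\nt^{-1/2})$ a.s.\ proved above (after matching each true change point to the nearest fitted one), which implies that the fitted interval differs from the true interval $(\nu^o_{k-1}\nt,\nu^o_k\nt]$ by at most $o(\nt^{1/2})$ time points. Writing the numerator as $S_{\mathrm{est}}=S_{\mathrm{true}}+\Delta S$ and $n^o$ for the number of true-interval time points, I would split
\[
\hat\mu_{T,h,w}^{(k)}-\mu^{o(k)}_{h,w}=\Big(\tfrac{S_{\mathrm{true}}}{a_h n^o}-\mu^{o(k)}_{h,w}\Big)+\tfrac{S_{\mathrm{true}}}{a_h}\Big(\tfrac{1}{\hat n}-\tfrac{1}{n^o}\Big)+\tfrac{\Delta S}{a_h\hat n}.
\]
The first bracket is the mean of $a_h n^o=\Theta(\nt)$ i.i.d.\ Poisson variables, so the classical LIL delivers exactly $O(\sqrt{\log\log(\nt)/\nt})$ a.s.; the second term is $O(|\hat n-n^o|/\nt)=o(\nt^{-1/2})$; and the third is $o(\nt^{-1/2})$ once $|\Delta S|=o(\nt^{1/2})$ is established. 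Since $\nt^{-1/2}=o(\sqrt{\log\log(\nt)/\nt})$, both corrections are absorbed and the stated rate follows.

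The hard part is the boundary term $\Delta S$: its observations lie in the adjacent true intervals and therefore carry a \emph{different} Poisson mean, so the LIL cannot be applied to them directly. The saving feature is that only a crude order bound $|\Delta S|=O(\text{number of boundary points})=o(\nt^{1/2})$ is needed after division by $\hat n=\Theta(\nt)$. I would obtain this bound uniformly over all admissible positions of $\hat\nu_{k-1},\hat\nu_k$ --- not merely for a fixed shift --- by applying the short-window almost-sure controls of Lemma~\ref{lemma:1} and Proposition~\ref{prop_3}, whose $\nt^{w}$-window convergence (invoked here through the count-linear likelihood derivative, and with the one-sided ``$\underline{\nu_d},\overline{\nu_u}$'' extension that lets the fitted interval spill into its neighbours) guarantees that partial sums over end-windows of length up to $\asymp\nt^{1/2}$ stay of that order almost surely. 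Assembling the three pieces then yields the claim for both the exact and the oversegmented region specifications.
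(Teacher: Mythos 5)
Your proof is correct, but it takes a genuinely different route from the paper's: the paper offers no self-contained argument for this lemma at all, deferring entirely to Lemma~2 of \citet{davis_yau_2013}, whose generic proof runs through a Taylor expansion of the score on the estimated segment, an LIL-type condition imposed on the score process, and a lower bound on the observed information --- machinery that is needed there because in the general piecewise-stationary setting the MLE has no closed form. You instead exploit the closed form of the Poisson MLE as a sample mean: the Hartman--Wintner LIL applied to the $\Theta(\nt)$ i.i.d.\ counts of the true interval delivers the $O(\sqrt{\log\log(\nt)/\nt})$ rate, while the $o(\nt^{-1/2})$ change-point localization from the preceding theorem makes both the renormalization term and the boundary term $o(\nt^{-1/2})$, which are absorbed since $\nt^{-1/2}=o(\sqrt{\log\log(\nt)/\nt})$. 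The one place to tighten is your boundary sum: the line $|\Delta S|=O(\text{number of boundary points})$ is not a deterministic inequality, since $\Delta S$ is a random sum over a randomly located window whose observations have a different mean; but your subsequent appeal to Proposition~\ref{prop_3}, combined with the non-negativity of the counts (the sum over any window of length at most $\epsilon\nt^{1/2}$ abutting a true change point is dominated by the sum over the fixed window of length $\lceil\epsilon\nt^{1/2}\rceil$, whose average converges a.s., and $\epsilon>0$ is arbitrary), closes this correctly and uniformly over the admissible endpoints, so it is a presentational matter rather than a gap. As for what each route buys: yours is elementary and self-contained, verifies the LIL rather than assuming it, and makes explicit where Assumption~\ref{assumption:1} and the oversegmentation structure (each fitted region lies inside a single true region, so its counts are i.i.d.) enter; the paper's citation buys generality, since the Davis--Yau argument covers dependent, mixing data and likelihoods without closed-form maximizers --- though for the Poisson model the two arguments essentially coincide, because the score is linear in the counts.
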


See Lemma 2 in \citet{davis_yau_2013} for more details.

Then we come to the main result.

\begin{theorem}
Let $\{ \bm{Y}_{t} | t=1,...,\nt \}$ be the observed images specified by $(K^{o}, \bm{\nu}^{o}, \bm{\psi}^{o})$. The estimator $(\hat{K}_{T}, \hat{\bm{\nu}}_{T}, \hat{\bm{\psi}}_{T})$ is defined by~(\ref{eqn:MDL_based}). Then we have
\begin{equation}
\begin{split}
    \hat{K}_{T} \xrightarrow{a.s.} K^{o}, \\
    \hat{\bm{\nu}}_{T} \xrightarrow{a.s.} \bm{\nu}^{o}, \\
    \hat{\bm{\psi}}_{T} \xrightarrow{a.s.} \bm{\psi}^{o} .
\end{split}
\end{equation}

\end{theorem}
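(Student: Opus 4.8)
The plan is to reduce everything to a single missing ingredient---ruling out overfitting---since the rest of the appendix already supplies the hard analytic machinery. The Corollary above guarantees, almost surely and for $\nt$ large, three facts: $\hat{K}_T \ge K^o$, every true normalized change point $\nu_k^o$ is a limit point of $\hat{\bm{\nu}}_T$, and each fitted segmentation $\hat{R}^{(k)}$ is either equal to or an oversegmentation of the true $R^{o(k)}$. (The no-underestimation half rests on Assumption~\ref{assumption:3}, which keeps neighboring intervals distinguishable, funneled through Theorem~\ref{theorem_1}.) Consequently the only things left to exclude are (i) overestimation of the number of change points, $\hat{K}_T > K^o$, and (ii) strict oversegmentation inside some fitted interval. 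Both are handled by the same ``local move versus penalty'' device applied to the criterion~(\ref{eqn:mdl}).

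First I would rule out overestimation. Suppose on a set of positive probability $\hat{K}_T>K^o$ infinitely often; by the Corollary at least one fitted change point $\hat{\nu}_j$ falls strictly inside a stretch that is homogeneous under the true model. Consider the competitor obtained by deleting $\hat{\nu}_j$ and merging its two adjoining fitted subintervals. The $K\log(\nt)$ term drops by $\log(\nt)$, and a short computation with the area term $\tfrac{\nw}{2}\sum_h \log((\tau_k-\tau_{k-1})a_h^{(k)})$ shows that merging two subintervals of durations $\Delta_1,\Delta_2 \sim \nt$ also lowers that penalty by order $\log(\nt)$, so the total penalty decreases by $\Theta(\log(\nt))$. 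By Proposition~\ref{prop_4} the expected per-observation log-likelihood is maximized at the truth and is unchanged by splitting a homogeneous stretch, so the \emph{only} likelihood loss incurred by the merge is sampling fluctuation: a likelihood-ratio quantity whose supremum over all candidate split locations is, by a law-of-the-iterated-logarithm bound (the same mechanism yielding the $O(\sqrt{\log\log(\nt)/\nt})$ rate in Lemma~2, fed by the higher-moment controls of Proposition~\ref{prop_1} with $v=1,2,4$ and the uniform convergence of Lemma~\ref{lemma:1}), of order $\log\log(\nt)$ and in any case $o(\log(\nt))$. Hence the merged model has strictly smaller MDL for $\nt$ large, contradicting the optimality of $(\hat{K}_T,\hat{\bm{\nu}}_T,\hat{\bm{\psi}}_T)$. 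Therefore $\hat{K}_T \le K^o$, and with the Corollary, $\hat{K}_T \xrightarrow{a.s.} K^o$.

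An entirely parallel argument removes strict oversegmentation. Note that in this asymptotic regime only $\nt$ grows while $\nx$ and $\nw$ are fixed, so the penalties $m^{(k)}\log(\nx)$ and $\tfrac{\log(3)}{2}\sum_h b_h^{(k)}$ are bounded; the operative penalty on the region count is again the area term, each additional region contributing a summand of order $\tfrac{\nw}{2}\log(\nt)$ because $\tau_k-\tau_{k-1}\sim\nt$. Thus merging two regions of a strict oversegmentation saves a penalty of order $\log(\nt)$, whereas by Proposition~\ref{prop_4} (oversegmentations carry the same expected log-likelihood as the truth) and the same iterated-logarithm fluctuation bound the likelihood cost of the merge is only $o(\log(\nt))$. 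The strict oversegmentation is therefore eventually suboptimal, so $\hat{R}^{(k)}=R^{o(k)}$ almost surely for every $k$.

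Finally I would assemble the pieces. With $\hat{K}_T=K^o$ in hand, Theorem~\ref{theorem_1} (now applicable, the end-effects from slightly misplaced fitted endpoints being controlled via Proposition~\ref{prop_3}) gives $\hat{\bm{\nu}}_T \xrightarrow{a.s.} \bm{\nu}^o$, and Theorem~2 supplies the rate $|\hat{\nu}_{t_k}-\nu_k^o|=o(\nt^{-1/2})$. Since each $\hat{R}^{(k)}=R^{o(k)}$, Lemma~2 yields $\hat{\mu}^{(k)}-\mu^{o(k)}=O(\sqrt{\log\log(\nt)/\nt})\to 0$ almost surely, so $\hat{\bm{\psi}}_T \xrightarrow{a.s.} \bm{\psi}^o$, completing all three claims. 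I expect the genuine obstacle to be steps~(i)--(ii): making the fluctuation bound on the overfitting gain hold \emph{uniformly} over all candidate spurious change-point locations and over the parameter set $\Theta_k(R^{(k)})$ in the almost-sure (not merely in-probability) sense, so that the $o(\log(\nt))$ control of the likelihood gain is secured; this is exactly where the uniform convergence of Lemma~\ref{lemma:1} and the moment bounds of Proposition~\ref{prop_1}, together with Assumptions~\ref{assumption:1} and~\ref{assumption:2}, carry the weight.
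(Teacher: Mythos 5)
Your proposal is correct, and its core is essentially the route the paper itself takes: the paper's ``proof'' of this theorem is a citation to Theorem 3 of \citet{davis_yau_2013}, whose argument is exactly your three-step scheme --- no underestimation from the Corollary (resting on Assumption~\ref{assumption:3} through Theorem~\ref{theorem_1}), elimination of overestimation by weighing the $\Theta(\log \nt)$ penalty saved when a spurious change point is deleted against the $O(\log\log \nt)=o(\log \nt)$ likelihood-ratio gain controlled by iterated-logarithm bounds, and then assembly of the convergence of $\hat{\bm{\nu}}_T$ and $\hat{\bm{\psi}}_T$ from Theorem~\ref{theorem_1}, the $o(\nt^{-1/2})$ localization rate, and the $O(\sqrt{\log\log(\nt)/\nt})$ parameter-rate lemma. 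One respect in which you go beyond the paper deserves emphasis: \citet{davis_yau_2013} has no spatial structure, so the citation alone cannot deliver $\hat{R}^{(k)} = R^{o(k)}$, and the appendix's Corollary only guarantees that fitted segmentations are (possibly strict) oversegmentations of the truth. Your explicit argument that strict oversegmentation is eventually suboptimal --- because the only penalty term that grows with $\nt$ is the area term $\frac{\nw}{2}\sum_{h}\log((\tau_k-\tau_{k-1})a_h^{(k)})$, which charges $\tfrac{\nw}{2}\log(\nt)$ per extra region, while Proposition~\ref{prop_4} forces the likelihood advantage of any oversegmentation over the truth to be pure sampling fluctuation --- is precisely the step needed to upgrade the Corollary to $\hat{\bm{\psi}}_T \xrightarrow{a.s.} \bm{\psi}^o$, and the paper leaves that step implicit in its proof-by-citation. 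The one soft spot is the one you flag yourself: the $o(\log \nt)$ fluctuation bound must hold uniformly over candidate change-point locations and over $\Theta_k(R^{(k)})$ in the almost-sure sense; this uniformity is established in \citet{davis_yau_2013} via H\'ajek--R\'enyi-type maximal inequalities and is not restated in this paper's appendix, so on that point your sketch sits at the same level of detail as the paper's own treatment.
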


See Theorem 3 in \citet{davis_yau_2013} for more details.

\bibliography{ref_change_points}
\bibliographystyle{aasjournal}

\end{document}